\declaretheorem[name=Corollary]{coro}
\declaretheorem[name=Assumption]{assn}
\declaretheorem[name=Definition]{defn}
\begin{document}

\title{International cooperation and competition in orbit-use management}

\author{Aditya Jain\\ \small Becker Friedman Institute for Economics
        \and 
        \href{https://akhilrao.github.io/pages/about.html}{Akhil Rao}\thanks{Department of Economics, Warner Hall, Middlebury College, 05753; akhilr@middlebury.edu} \\ \small Middlebury College}
\date{This version: September 10, 2022}

\maketitle

\begin{abstract}
 Orbit-use management efforts can be structured as binding national regulatory policies or as self-enforcing international treaties. New treaties to control space debris growth appear unlikely in the near future. Spacefaring nations can pursue national regulatory policies, though regulatory competition and open access to orbit make their effectiveness unclear. We develop a game-theoretic model of national regulatory policies and self-enforcing international treaties for orbit-use management in the face of open access, regulatory competition, and catastrophe. While open access limits the effectiveness of national policies, market-access control ensures the policies can improve environmental quality. A large enough stock of legacy debris ensures existence of a global regulatory equilibrium where all nations choose to levy environmental regulations on all satellites. The global regulatory equilibrium supports a self-enforcing treaty to avert catastrophe by making it costlier to leave the treaty and free ride.
\end{abstract}

JEL codes: F53, H77, Q53, Q54, Q58 \par
Keywords: open-access commons, satellites, space debris, international treaty, regulatory competition

\newpage
\section{Introduction}

Orbital space is a critical resource for global economic development. Indeed, a recent assessment found that satellite services make significant contributions to 13 of 17 UN Sustainable Development Goals \citep{unoosa2018european}. Yet orbital space is a global commons under an open access regime. International law limits the degree to which orbit users can exclude each other from valuable orbital slots, while the laws of physics prevent multiple objects from occupying the same slots. Objects attempting to occupy the same slots will collide, creating long-lived and fast-moving debris which further increases collision risk. As debris proliferates, collisions between debris objects may cross a tipping point beyond which debris growth becomes self-sustaining, a condition known as ``Kessler Syndrome''. While the relevant physical dynamics operate over years and decades, crossing the tipping point would be a global catastrophe which would severely limit human use of outer space. \\


How can orbital space be managed through binding policies at the national or international levels? Are there trade-offs between national and international approaches? To what extent does the risk of Kessler Syndrome support or hinder these policy approaches? And how does open access to orbital space---a fundamental constraint imposed by the Outer Space Treaty (OST), a foundational pillar of international space law---affect the prospects for national or international orbital-use management? This paper addresses these questions through a game-theoretic model of rational orbit use. While components of these questions have been studied separately, there is relatively little work exploring the interactions between national and international orbital-use management approaches given the relevant physical and institutional constraints. Understanding these interactions is critical to designing effective orbital-use management policies and predicting policy outcomes.\footnote{Specifically, Articles I, II, and VI of the OST specify free access to outer space (I), non-appropriation of outer space (II), and national supervision of space activities by non-governmental entities (VI). Articles I and II render orbital space an open-access commons, while Article VI enables national satellite regulatory policies.} \\


Our paper connects to three literatures in economics and environmental management. First, we extend the literature on orbital-use management. Prior work in this area has established several stylized facts: the inefficiency of decentralized orbit use \citep{adilov2015economic, rouillon2020physico}; the importance of tipping points and decentralized rational actors' inability to avoid crossing them \citep{lewis2020understanding, rao2022cost}; and the potential for satellite tax-like policies to efficiently limit collision risk and debris accumulation while increasing the value generated by satellites \citep{rao2018economic,rao2020orbital, beal2020taxing}. However, this literature has not yet considered how management policies can be structured given international open access and national competition for satellite services, and with the exception of \cite{singer2011international} it has largely ignored the potential of debris abatement treaties. \\

Second, we contribute to the broader literature about environmental federalism and decentralized commons management. Prior work in this area has established conditions under which regulatory competition between jurisdictions can effectively manage environmental quality \citep{oates1999essay, millimet2013environmental}, particularly in the face of limited regulatory control over spatially-mobile resources and public bads \citep{costello2015partial, costello2017private}. This literature has largely focused on cases where resource users and public bads are spatially separated and there are no catastrophic thresholds, and has not studied the interactions between national policies and international treaties. Orbital space is novel in that ``partial enclosure of the commons'' (i.e. limiting access to a specific physical space to prevent resource collapse) is physically and legally infeasible.\footnote{The notion of commons enclosure relevant to orbital space is better described as ``limited enclosure''. In the fisheries context of \cite{costello2015partial}, ``limited enclosure'' would involve one nation preventing boats under their flag from fishing in the high seas. Such attempts are unlikely to be as effective in preserving marine populations as limiting access to the high seas by all nations, since boats could fly flags of convenience to circumvent the regulations and sell their catch at the world market price. However, a key difference between orbit use and many other terrestrial natural resources is that satellite services typically can't be resold like fish or physical goods, giving nations an additional locus of control unavailable in most other resource contexts.}. \\

Third, we contribute to the literature on self-enforcing international environmental treaties. Prior work in this area has considered how uncertainty over a catastrophe threshold affects treaty formation \citep{barrett2013climate, barrett2014sensitivity} and even how a space debris abatement treaty could be structured \citep{singer2011international}. However, the literature has not yet considered pollutant abatement treaties in the face of catastrophe, national policies, and open access to the polluted commons by profit-maximizing firms. \\


In this paper we build a game-theoretic model of orbit use by commercial satellite operators and spacefaring nations. Our model features profit-maximizing firms competing to provide satellite services under open access, debris accumulation and collision damages with a catastrophe threshold, national efforts to attract satellite services and regulate them, and national cooperation to abate debris and avert catastrophe. Our model enables us to identify three key features about orbital-use management through national policies and international treaties. First, despite satellite operators migrating to nations with less-stringent policies, national policies can improve all nations' welfare from satellite services and orbital environmental quality. Open access therefore will not render national regulatory policies fruitless. Second, if the legacy debris stock is large enough, a global regulatory equilibrium can exist where all nations impose regulations on their own and each others' satellite fleets without transfers of surplus between nations. Third, by altering the marginal benefits of debris abatement, national regulatory policies can strengthen a self-enforcing treaty to prevent Kessler Syndrome. \\

Our results bring together several threads from the existing literature on commons governance, both in general and in the space debris context. \cite{weeden2012taking} draw on Ostrom's principles for commons governance to suggest that a nested system of governance may be a successful approach to orbital-use management. \cite{johnson2012application} express reservations about the potential for actually implementing such an overlapping system of orbital governance. \cite{wouters2016space} argue that a new debris abatement treaty is unlikely in the near future and instead focus on policy actions European nations can take to manage orbit use. Similarly, \cite{migaud2020protecting} and \cite{gilbert2021major} focus on actions the US can take to manage orbit use using only existing national policy authorities. \cite{munoz2018regulating} points to OST interpretations which support treating debris as salvage to facilitate debris removal. We offer guidance on implementing these ideas while accounting for the responses of satellite operators and other nations, as well as indications of how they may support further improvements in orbital-use management. Our results are also informative about other global common-pool resources subject to global open access, environmental tipping points, and a mix of national and international management practices. These include the atmosphere, the North Pole, and the lunar surface. \\


This paper is organized as follows. In section \ref{sec:model} we describe the model and present key results. We begin with interactions between firms and the physical environment, then consider national incentives and regulations, and finally explain the international treaty for net debris abatement. Throughout we provide results illustrating properties of each component of the system. We discuss the significance of our results, their connections to existing literatures, and their relation to unmodeled national security issues in section \ref{sec:discussion}. We conclude in section \ref{sec:conclusion}. All proofs are shown in the Appendix.

\section{Model and results}
\label{sec:model}

We focus on long-run orbit use by competitive satellite operators.\footnote{``Long-run'' in this context should not be interpreted as ``steady state'', as we allow for tipping points to be crossed and damages to accrue. Instead, it should be interpreted as a period sufficiently far in the future that we can consider the present value of accumulated streams of benefits and costs. Similar approaches are taken in \cite{weitzman1998far} and \cite{delong2012fiscal}, where catastrophe, long-run damages or output gaps, and policies to mitigate them feature prominently.} As such, we abstract from the details of orbital-use dynamics and focus on two properties of orbit use established in \cite{somma2019sensitivity}, \cite{lewis2020understanding}, and \cite{rao2022cost}. First, the long-run collision risk is a function of the long-run launch rate, with higher launch rates implying greater collision risk. We therefore treat collision risk to satellites as an increasing function of the debris stock and the debris stock as an increasing function of the number of satellites maintained in orbit. Larger satellite stocks imply greater collision-related damages. Second, there is a threshold level of debris beyond which debris growth becomes self-sustaining. Such debris growth can eventually render a region of orbital space unusable on the order of years or decades, and crossing the threshold will not immediately stop rational economic agents from launching satellites. We therefore model the long-run damages from crossing the threshold as a lumpsum value incurred immediately, reflecting the present discounted value of a stream of damages which slowly increase over time.\footnote{Following \cite{weitzman1998far}, we assume these streams are discounted at the lowest possible rate.} \\

Operators provide a homogeneous output, e.g. telecommunications, and face a constant exogenous price in each nation in which they operate.\footnote{While telecommunications products in reality are differentiated by factors like bandwidth and coverage, these features are unlikely to change our key messages regarding open access, regulations, and treaties.} The constant price assumption reflects the existence of a terrestrial substitute good and a competitive market, both limiting satellite operators' ability to set prices. These prices vary across nations to reflect their different market sizes and attractiveness to satellite operators. Operators can use the same satellite to provide services in multiple nations. The lack of excludability over orbital slots creates an open-access problem, wherein operators launch satellites until the marginal operator receives zero profits \citep{adilov2015economic, rouillon2020physico, rao2020orbital}.\footnote{Paths in low-Earth orbit are dynamic objects in 3-dimensional space, requiring 6 parameters---three positions and three velocities---to completely specify. Despite this, they can be characterized as ``slots'' in a higher-dimensional space through appropriate coordinate transformations \citep{arnas2021definition}.} In addition to prices for satellite services, nations are also distinguished by heterogeneous costs of launching and operating satellites from their territories (e.g. due to different launch technology or labor force availabilities). \\

Nations seek to regulate satellite operators in order to maximize the net benefits they receive from satellite services. These regulations are imposed as conditions for market access, i.e. the right to transmit and receive signals (more generally, sell and purchase products, e.g. satellite imagery) to and from customers located within a nation's jurisdiction.  We model binding national regulatory policies as ``virtual satellite taxes'', reflecting that any policy which alters firms' behavior does so by imposing costs on the firms. These taxes can therefore include policies such as regulatory barriers (e.g. compliance costs due to arms control or environmental regulations) or explicit taxes which generate revenues (e.g. orbital-use fees as described in \cite{rao2020orbital}).\footnote{While policies regulating orbiting satellites and satellite launches are equivalent in long-run models, their dynamics vary considerably, with satellite-focused policies being more efficient than launch-focused policies \citep{rao2018economic}. We use the ``satellite tax'' terminology to avoid short-run dynamic inefficiency issues.} We assume these taxes are ``bilateral'' between nations, i.e. levied by one nation on all satellites operated under another nation's jurisdiction, rather than differentiated by individual operators.\footnote{Operator-specific taxes would add more complexity without much additional insight regarding the questions considered here.} Nations receive two types of benefits from implementing regulatory policies. First, by reducing the total stock of satellites in orbit, satellite operators incur fewer collision-related satellite losses. Nations then receive more satellite services as existing satellites are destroyed less frequently. Second, to the extent that these taxes generate revenues, nations can use those revenues to provide public goods or reduce distortionary taxes. We assume that the taxes generate no revenues so as to keep the focus solely on benefits received through environmental management. \\

Finally, nations also negotiate a treaty for net debris abatement, and choose whether or not to be party to the treaty. We depart from the setting considered in \cite{singer2011international} by microfounding the national benefits from debris abatement, incorporating open-access behavior by satellite operators, and introducing the tipping point for Kessler Syndrome. While we derive the national benefits of debris abatement from the national net market benefits from satellite services, it is straightforward to include additional terms reflecting non-market economic benefits from satellite services.\footnote{Incorporating strategic interactions between nations, e.g. due to national security or dual-use concerns, is less straightforward. Such interactions introduce technical challenges relating to interdependent preferences. We briefly discuss this issue in section \ref{sec:discussion} but leave a full modeling treatment for future research.} 

\subsection{Satellites, debris, and catastrophe}

Identical satellites can provide services to any of the $N$ nations (markets) in the world. There are $n_s \leq N$ nations from which satellites are operated. We refer to these as spacefaring nations (even if they do not possess their own launch capabilities). The satellite sector in spacefaring nation $i$ controls $S_i$ satellites. Launching and maintaining these satellites costs $m_i S_i^2$, where $m_i$ is a sector efficiency parameter reflecting the cost-minimizing mix of launch services, capital, and labor supply available to satellite operators based in nation $i$.\footnote{Firms may procure inputs internationally, in which case $m_i$ will also reflect exchange rates and relevant international regulations facing the satellite sector in nation $i$.} The price of a unit of satellite service in nation $j$ is a constant $p_j$, reflecting the size of the market for satellite services in nation $j$.\footnote{Constant prices approximate the case for satellite services with terrestrial substitutes, e.g. broadband.} Subscripted dots indicate arrays over the omitted index, e.g. $p_{\cdot} = (p_1, \dots, p_N)$, while negative indices indicate sums over all entries except the index shown, e.g. $S_{-i} = \sum_{i' \neq i}S_{i'}$. \\

Each active satellite produces $d$ units of debris over its lifecycle. The long-run stock of debris in orbit, $D$, is determined by the long-run total satellite stock, $S = \sum_{i=1}^{n_s} S_i$, the ``legacy'' stock of debris unrelated to ongoing orbit use by active satellites $D_0$, and net debris abatement by all nations $Q$, which is defined relative to the benchmark level of debris due to ongoing orbit use and the legacy debris stock, $dS + D_0$. The long-run debris stock is shown in equation \ref{eqn:debris_law}:
\begin{equation}
    D = dS + D_0 - Q.
    \label{eqn:debris_law}
\end{equation}

Debris abatement $Q$ contains two components: reduction of debris produced by satellites and debris removal activities. It is sometimes useful when discussing debris abatement (particularly in section \ref{sec:model_treaties}) to express the Kessler Syndrome threshold $\bar{D}$ in terms of the debris abatement required to avert it, $\bar{Q}$. The catastrophe-averting level of abatement, $\bar{Q}$, can be derived from equation \ref{eqn:debris_law} as follows:
\begin{align}
    D > \bar{D} \implies& dS + D_0 - Q > \bar{D} \nonumber \\
    \implies& Q < dS + D_0 - \bar{D} \nonumber \\
    \implies& \bar{Q} = dS + D_0 - \bar{D}.
\end{align}

Debris damages active satellites. The probability a satellite survives in orbit given debris stock $D$ is shown in equation \ref{eqn:sat_destrn}:
\begin{align}
    \label{eqn:sat_destrn}
    Pr(\text{survive}) &= 1 - kD,\\
    k : Pr(\text{survive}) &\in [0,1].
    \label{eqn:sat_destrn_k_restriction}
\end{align}

$k$ is the probability of collision per unit of debris, and condition \ref{eqn:sat_destrn_k_restriction} is a physical restriction imposed by this interpretation. If debris levels exceed a threshold $\bar{D}$, self-sustaining debris growth occurs. This is a slow-moving catastrophe: over many years, collisions between debris generate more debris, and eventually  that region of orbital space is rendered unusable. Satellite sectors incur common long-run damages of $X$ from crossing the threshold. There may be years or decades after the threshold is crossed during which satellite sectors continue to accrue profits before collision risk becomes large enough to make satellite operation unprofitable.\footnote{Estimates of damages to modern economies from the loss of satellite services suggest the damages parameter from Kessler Syndrome in valuable orbits can be large \citep{bradley2009space, schaub2015cost, o2019economic}, though many relevant structural parameters (e.g. substitution possibilities, indirect and induced demand effects) are uncertain and may reduce the damages \citep{van2021world, highfill2022estimating}. Given the potential for large damages and ``tail-fattening'' effects of structural parameter uncertainty \citep{weitzman2009modeling}, we ignore idiosyncratic national variation in the damages parameter.} Since our focus is on long-run outcomes we also abstract from the dynamics of debris growth and focus on whether or not the threshold is crossed. Following \citet{rao2022cost}, we assume satellite operators do not internalize the costs of crossing the self-sustaining debris growth threshold.  \\

Each nation receiving satellite services can impose regulatory burdens on satellite sectors. These include radio spectrum licensing procedures, explicit Pigouvian taxes on debris creation, satellite design requirements, and any other costly condition for market access. These regulatory burdens are satellite taxes, $\tau_{ij}$, imposed by market $j$ on sector $i$.\footnote{\cite{rao2018economic} shows that the optimal instrument in a dynamic context where objects can be deorbited and both active satellites and debris can induce collisions is a ``stock control'': a tax levied on all objects in orbit. \cite{rao2020orbital} describe these controls as ``orbital-use fees'', since they are in effect fees for using orbital space. \cite{guyot2021designing} show that orbital-use fees can be decomposed into separate streams of fees levied on launch debris and satellites at different stages of their lifecycle. In our setting, where we abstract from dynamics, make the debris stock linearly related to the satellite stock, and only allow debris to cause collisions, orbital-use fees are equivalent to satellite taxes.} Satellite taxes reduce service delivery from the taxed sector to the taxing market, so the quantity of satellite services received in market $j$ is $\sum_{i}(1-\tau_{ij})S_{i}$. Denial of satellite sector $i$'s access to market $j$ implies $\tau_{ij} = 1$. 
The net profits accrued by sector $i$, $Y_i$, are shown in equation \ref{eqn:satellite_profits}.
\begin{align}
    Y_i &= (1 - kD)\sum_{j=1}^N p_j (1 - \tau_{ij})  S_i - m_i S_i^2.
    \label{eqn:satellite_profits}
\end{align}

The inclusion of collision risk in front of the satellite stock in equation \ref{eqn:satellite_profits} reflects the long-run character of the model: of the $S_i$ satellites launched (reflected in cost term $m_i S_i^2$), a fraction $1 - kD$ will be destroyed in collisions and therefore unable to deliver services.\footnote{We focus on the costs of destructive collisions in this paper, but collision-avoidance maneuvers create another type of cost with similar effects. We abstract from these costs since they would introduce additional complexity without changing our results.} We use this long-run property to abstract from the dynamics of launch and destruction. The size of satellite sector $S_i$ is determined by an open-access condition, shown in equation \ref{eqn:open_access}:
\begin{equation}
    S_i^\star(S_{-i},\tau_{i\cdot}, Q) : Y_i = 0.
    \label{eqn:open_access}
\end{equation}

Equation \ref{eqn:open_access} reflects the fact that satellite operators in each nation are unable to secure exclusive property rights to orbital space, and therefore launch satellites until there are no further profits to doing so. Solving equation \ref{eqn:open_access} yields an expression for the size of satellite sector $i$ as a function of the satellite taxes $\tau_{i \cdot}$, the total size of all other satellite sectors $S_{-i}$, and the global level of debris abatement $Q$.\footnote{Since equation \ref{eqn:open_access} is quadratic in $S_i$ with no intercept, there are two roots: one where $S_i = 0$ and one where $S_i > 0$. We ignore the root where $S_i = 0$.}  The expression is shown in equation \ref{eqn:sat_bestresponse}:
\begin{equation}
    S_i^\star(S_{-i},\tau_{i\cdot}, Q) = \frac{\sum_j p_j (1 - \tau_{ij}) (1 - k(dS_{-i} + D_0 - Q))}{kd\sum_j p_j (1 - \tau_{ij}) + m_i}.
    \label{eqn:sat_bestresponse}
\end{equation}

The congestibility of orbital space suggests satellite sectors will compete with each other for physical space in orbit, i.e. national satellite fleets will be strategic substitutes. Differentiating equation \ref{eqn:sat_bestresponse} for arbitrary sectors $i$ and $j \neq i$ reveals this to be the case:
\begin{equation}
    \pdv{S^\star_i}{S_j} = - \frac{k d p_j (1 - \tau_{ij}) }{kd\sum_j p_j (1 - \tau_{ij}) + m_i} < 0.
    \label{eqn:sat_bestresponse_static}
\end{equation}

This presents a challenge to national regulatory approaches, as some of the fleet size reductions achieved by one sector will be partially offset by fleet size increases from others. This does not imply that reductions by sector $j$ have no effect on the total fleet size---as long as sector $i$ faces positive launch costs they will not fully offset $j$'s reduction ($m_i>0 \implies \pdv{S^\star_i}{S_j} > -1$). \\

A \textit{global open-access equilibrium}, formalized in Definition \ref{def:open_access}, is a distribution of satellites such that every satellite sector earns zero profits---a fixed point of all $i$ best-response functions in equation \ref{eqn:sat_bestresponse}.

\begin{defn}(Global open-access equilibrium)
\label{def:open_access}
    A global open-access equilibrium is a distribution of satellites $\{S_i^\star\}$ such that, given a debris abatement level $Q$, a legacy debris stock $D_0$, distribution of satellite taxes $\{\tau_{ij}\}$, market sizes $\{p_j\}$, sector efficiencies $\{m_i\}$, and technologies $(k,d)$, equation \ref{eqn:open_access} holds for all $i$.
\end{defn}

Lemma \ref{lemm:open_access_eqm} establishes existence and a characterization of the global open-access equilibrium with an arbitrary number of nations. Lemma \ref{lemm:reduction} establishes that as long as the equilibrium to the $n_s$-player game exists, there exists a $2$-player game such that the equilibrium satellite stock for sector $i$ is identical to $S^*_i$ in the $n_s$-player game, with the satellite fleet of the remaining nations being identical to $S^*_j$. This reduction allows us to use a $2$-player game to establish properties of the $n_s$-player game. Intuitively, the reduction is possible because any firm in nation $i$ only needs to know how many satellites are launched from other firms in nation $i$ (with whom it competes for satellite launch and operations resources as well as orbital space) and those launched from all other sectors (with whom it competes for orbital space).\footnote{That is, Lemma \ref{lemm:reduction} establishes that economic competition for orbital space has a form of ``diffuse reciprocity'' \citep{keohane1986reciprocity}.}\\

\begin{restatable}[Global open-access equilibrium]{lemm}{oaEqmExist}
\label{lemm:open_access_eqm}
Let $S$ be the $n_s \times 1$ vector of satellite sector sizes and $I$ be the $n_s \times n_s$ identity matrix. Let $A$ be the $n_s \times 1$ vector with typical elements
\begin{equation}
    A_i = \frac{\sum_j p_j (1 - \tau_{ij})(1 + k(Q-D_0))}{kd \sum_j p_j (1 - \tau_{ij}) + m_i}
\end{equation}
and $B$ be the $n_s \times n_s$ matrix with $0$ on the main diagonal and rows containing copies of
\begin{equation}
    B_{i} = - \frac{k d \sum_j p_j (1 - \tau_{ij})}{kd \sum_j p_j (1 - \tau_{ij}) + m_i}.
\end{equation}

Then, if $det(I - B) \neq 0$, the global open-access equilibrium exists and is
\begin{equation}
    S = (I - B)^{-1}A.
\end{equation}
\end{restatable}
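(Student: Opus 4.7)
The plan is to recognize that the best-response condition \ref{eqn:sat_bestresponse} is affine in the aggregate $S_{-i}$, so that requiring it to hold simultaneously for all $i$ reduces to a linear system in the vector $S$. Once this reformulation is in place, existence of the equilibrium follows by matrix inversion under the stated nondegeneracy condition $\det(I-B) \neq 0$, and the formula $S = (I-B)^{-1}A$ falls out immediately.

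First I would abbreviate $P_i \equiv \sum_j p_j(1-\tau_{ij})$ and distribute the factor $1 - k(dS_{-i} + D_0 - Q)$ across the numerator of \ref{eqn:sat_bestresponse}. Collecting the constant part and the $S_{-i}$ part separately gives
\[
    S_i \;=\; \frac{P_i\bigl(1 + k(Q - D_0)\bigr)}{kd P_i + m_i} \;-\; \frac{kd P_i}{kd P_i + m_i}\, S_{-i}.
\]
By inspection the first term is exactly $A_i$ as defined in the statement and the coefficient on $S_{-i}$ is exactly $B_i$. Substituting $S_{-i} = \sum_{j \neq i} S_j$ then shows that the $n_s$ equations $\{S_i = S_i^\star(S_{-i}, \tau_{i\cdot}, Q)\}_{i=1}^{n_s}$ are precisely the rows of the matrix equation $S = A + BS$, because $B$ carries $B_i$ in every off-diagonal entry of row $i$ and zero on the diagonal.

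Rearranging gives $(I-B)S = A$. The hypothesis $\det(I-B) \neq 0$ is exactly what is required to invert $I-B$, and doing so produces $S = (I-B)^{-1}A$. By construction this vector is a simultaneous fixed point of the best-response map \ref{eqn:sat_bestresponse} and hence, by Definition \ref{def:open_access}, a global open-access equilibrium.

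The step that requires the most care---rather than genuine difficulty---is handling the multiplicity of roots of the zero-profit condition \ref{eqn:open_access}. That equation is quadratic in $S_i$ with $S_i = 0$ as a trivial root; the derivation of \ref{eqn:sat_bestresponse} already discards that root (as flagged in the footnote there), and the matrix reduction inherits the same selection, so the lemma should be read as characterizing the nontrivial equilibrium. It is also worth noting in the write-up that the invertibility hypothesis guarantees existence and uniqueness of the linear system's solution but does not by itself ensure economic admissibility, e.g., nonnegativity of the entries of $S$ or compatibility with the survival-probability constraint \ref{eqn:sat_destrn_k_restriction}; these would follow from further restrictions on $(p_\cdot, m_\cdot, \tau_{ij}, k, d, D_0, Q)$ that can be checked separately.
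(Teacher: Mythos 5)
Your proposal is correct and follows essentially the same route as the paper's own proof: rewrite each best response as the affine relation $S_i = A_i + B_i S_{-i}$, stack these into the linear system $S = A + BS$, and invert $I - B$ under the nondegeneracy hypothesis. Your additional remarks on root selection and economic admissibility are sound but do not change the argument.
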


\begin{restatable}[Reduction]{lemm}{oaReduction}
\label{lemm:reduction}
For every $n_s$-player global open-access equilibrium $S^\star$, there exists a $2$-player global open-access equilibrium $S^{\star,2}$ with players $i$ and $j$ such that 
\begin{align}
    S_i^\star &= S_i^{\star,2}, \\
    S_{-i}^\star &= S_j^{\star,2}.
\end{align}
\end{restatable}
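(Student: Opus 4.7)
The approach is to construct an auxiliary 2-player game whose parameters match the $n_s$-player game on player $i$'s side and collapse the remaining $n_s - 1$ players into a single composite opponent, then verify that $(S_i^\star, S_{-i}^\star)$ is a fixed point of the resulting pair of best-response functions. The key structural observation, visible from equation \ref{eqn:sat_bestresponse}, is that each sector's best response depends on its own parameters only through the pair $(P, m)$ with $P \equiv \sum_j p_j(1-\tau_{\cdot j})$, and depends on rivals only through the aggregate $S_{-i}$. Consequently each player contributes exactly two shape parameters, and rivals are indistinguishable from a single opponent of equal total size.

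I would proceed as follows. Fix the $n_s$-player equilibrium $S^\star$, the target player $i$, and write $S = \sum_{i'} S_{i'}^\star$ and $S_{-i}^\star = S - S_i^\star$. Let player $a$ in the 2-player game inherit nation $i$'s parameters verbatim, with $P_a = \sum_j p_j(1-\tau_{ij})$ and $m_a = m_i$. Because the debris pressure $dS_{-i}^\star + D_0 - Q$ appearing in player $a$'s best response does not depend on how the rival fleet is partitioned across nations, substituting $S_b = S_{-i}^\star$ into equation \ref{eqn:sat_bestresponse} for player $a$ returns exactly $S_i^\star$.

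For the composite player $b$, I would pick any $P_b > 0$ (e.g.\ $P_b = 1$) and solve for $m_b$ from the requirement that $b$'s best response to $S_a = S_i^\star$ equal $S_{-i}^\star$. The best-response equation
\begin{equation*}
S_{-i}^\star \;=\; \frac{P_b\bigl(1 - k(dS_i^\star + D_0 - Q)\bigr)}{kdP_b + m_b}
\end{equation*}
is linear in $m_b$ and rearranges to $m_b = P_b(1 - kD^\star)/S_{-i}^\star$, where $D^\star = dS + D_0 - Q$ is the original-equilibrium debris stock. Positivity of $m_b$ is automatic: the profit condition $Y_i = 0$ together with $S_i^\star > 0$ forces $1 - kD^\star > 0$ at the original equilibrium, while $S_{-i}^\star > 0$ follows from summing positive stocks across the remaining nations (assuming $n_s \geq 2$). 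Lemma \ref{lemm:open_access_eqm} applied to the constructed 2-player game, whose $2 \times 2$ matrix $I - B$ has determinant $1 - b_a b_b$ with $b_a, b_b \in (-1, 0)$ and is therefore nonsingular, delivers existence and uniqueness; the fixed-point verification above identifies the equilibrium as $(S_i^\star, S_{-i}^\star)$.

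The main obstacle is conceptual rather than technical: recognizing that matching a single scalar $S_{-i}^\star$ at a single rival stock $S_i^\star$ requires only one degree of freedom in the composite player's profile, so that treating $P_b$ as a free normalization and solving for $m_b$ alone is sufficient. Once this is seen, the verification reduces to a short linear-in-$m_b$ rearrangement and a direct substitution into equation \ref{eqn:sat_bestresponse}.
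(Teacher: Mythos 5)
Your proposal is correct, and it takes a genuinely different route from the paper. The paper works through the matrix representation of Lemma \ref{lemm:open_access_eqm}: it posits a $2$-player analog whose vectors $A^2, B^2$ preserve the sums $\sum_i A_i$ and $\sum_i B_i$, partitions $G=(I-B)^{-1}$ and $H=(I^2-B^2)^{-1}$ into blocks, and asserts that the relevant block products coincide. You instead exploit the aggregative structure directly: since each sector's best response in equation \ref{eqn:sat_bestresponse} depends on rivals only through the scalar $S_{-i}$ and on own parameters only through $\bigl(\sum_j p_j(1-\tau_{ij}), m_i\bigr)$, you keep player $i$ verbatim, and you reverse-engineer a composite opponent by fixing $P_b$ and solving the (linear-in-$m_b$) best-response equation so that it passes through $(S_i^\star, S_{-i}^\star)$; the tidy closed form $m_b = P_b(1-kD^\star)/S_{-i}^\star$ and the zero-profit argument for $1-kD^\star>0$ make positivity immediate, and invertibility of the $2\times 2$ system pins down the constructed game's equilibrium as exactly that fixed point. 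What your approach buys is an explicit, verifiable construction of the composite player and a uniqueness argument that closes the loop, where the paper's block-matrix identities are left as assertions; what the paper's approach buys is a formulation that sits directly on top of Lemma \ref{lemm:open_access_eqm} and treats all $n_s-1$ rivals symmetrically without singling out a normalization such as $P_b=1$. Two small caveats you should flag: the construction needs $S_{-i}^\star>0$ (if all other sectors are degenerate you should instead take a composite player with full market denial, $\tau_{bj}=1$ for all $j$), and realizing $P_b$ as $\sum_j p_j(1-\tau_{bj})$ for admissible $\tau_{bj}\in[0,1]$ requires choosing $P_b \le \sum_j p_j$ rather than an arbitrary positive number; neither affects the substance.
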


The equilibrium satellite distribution with two spacefaring nations is shown in equation \ref{eqn:two_nation_eqm}:
\begin{equation}
    (S_i^\star, S_j^\star) = \left( \sigma_i r_i , ~ \sigma_j r_j \right),
    \label{eqn:two_nation_eqm}
\end{equation}

where
\begin{align}
    \sigma_i & \equiv \frac{(1+k(Q-D_0))(1-kdr_j)}{kd \sum_{j=1}^{n_s} p_j (1 - \tau_{ij}) + m_i}, \\
    r_i & \equiv \frac{\sum_{j=1}^{n_s} p_j(1-\tau_{ij})}{kd\sum_{j=1}^{n_s} p_j(1-\tau_{ij}) + m_i}.
\end{align}

$\sigma_i$ reflects the benefit-cost ratio of marginal debris abatement---the two terms in the numerator reflect the effects of abatement itself ($1+k(Q-D_0)$) and the effects of additional collision risk due to others ($1 - kdr_j$). $\sigma_i$ has percentage units. $r_i$ reflects the benefit-cost ratio of operating another satellite, accounting for the costs of forgone revenues due to collision risk ($kd\sum_{j=1}^{n_s} p_j(1-\tau_{ij})$). It has units of satellites, and can be interpreted as the maximum economically-sustainable satellite sector size for nation $i$. Lemma \ref{lemm:decomposition} establishes that the decomposition in equation \ref{eqn:two_nation_eqm} is valid even in the $n_s$-player case.\\

\begin{restatable}[Decomposition]{lemm}{oaDecomposition}
\label{lemm:decomposition}
The global open-access equilibrium satellite distribution can be decomposed into $S = \sigma \odot r$, where $\sigma$ and $r$ are $n_s \times 1$ vectors, $\odot$ is the Hadamard product, and only $\sigma$ depends on $Q$.
\end{restatable}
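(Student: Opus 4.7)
The plan is to derive the decomposition directly from Lemma \ref{lemm:open_access_eqm}'s closed form $S = (I-B)^{-1} A$, by isolating the $Q$-dependence in each factor.

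The first step is to observe that the scalar $(1+k(Q-D_0))$ factors cleanly out of the numerator of $A_i$. Defining
$r_i \equiv \frac{\sum_j p_j(1-\tau_{ij})}{kd\sum_j p_j(1-\tau_{ij}) + m_i},$
this yields $A = (1+k(Q-D_0))\, r$ as vectors in $\mathbb{R}^{n_s}$. By construction, $r$ depends only on the market prices $p_j$, bilateral taxes $\tau_{ij}$, launch-cost parameters $m_i$, and the physical parameters $k,d$---none of which involve $Q$. Inspection of the entries $B_{i}$ in Lemma \ref{lemm:open_access_eqm} shows that $B$ is likewise independent of $Q$.

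Substituting into $S = (I-B)^{-1}A$ gives $S = (1+k(Q-D_0))\, (I-B)^{-1} r$, and I would then define $\sigma_i \equiv S_i/r_i$ component-wise. Since $m_i > 0$ forces the denominator of $r_i$ to be strictly positive, $r_i > 0$ and the ratio $\sigma_i$ is well defined. By construction $S_i = \sigma_i\, r_i$, so $S = \sigma \odot r$; and because both $(I-B)^{-1}$ and $r$ are $Q$-free, every bit of $Q$-dependence in $S$ is absorbed into the scalar prefactor $(1+k(Q-D_0))$ carried by $\sigma$.

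The main care required is not algebraic but the verification that the decomposition is well-defined on the full parameter regime where Lemma \ref{lemm:open_access_eqm} applies---i.e., that each $r_i$ is nonzero, so the division defining $\sigma_i$ is always legitimate. This follows from $m_i > 0$, implicit in the model specification. An alternative route to the same decomposition, useful for interpretation, is to factor the best-response directly from equation \ref{eqn:sat_bestresponse} as $S_i^\star = r_i \,(1+k(Q-D_0) - kd\, S_{-i}^\star)$, identifying $\sigma_i$ with the bracketed benefit-cost term evaluated at equilibrium---matching the two-nation expressions in equation \ref{eqn:two_nation_eqm}.
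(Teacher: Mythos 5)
Your proof is correct, and it diverges from the paper's at the one step that matters. Both arguments start from Lemma \ref{lemm:open_access_eqm}'s closed form $S=(I-B)^{-1}A$ and both isolate the $Q$-dependence by writing $A=\phi\odot r$ with $\phi$ the constant vector $(1+k(Q-D_0))\mathbf{1}$. The paper then defines $\sigma=(I-B)^{-1}\phi$ and concludes $S=(I-B)^{-1}(\phi\odot r)=\sigma\odot r$, which implicitly commutes the Hadamard product past the matrix inverse; that identity fails for general $r$ (in the two-player case it would require $r_i=r_j$ or $B_i=0$), and the resulting $\sigma$ does not reproduce the explicit two-nation expression in equation \ref{eqn:two_nation_eqm}. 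You instead exploit the fact that $\phi$ is a scalar multiple of $\mathbf{1}$, pull the scalar out to get $S=(1+k(Q-D_0))(I-B)^{-1}r$, and define $\sigma$ as the componentwise quotient $S\oslash r$. That construction proves the lemma as stated---$(I-B)^{-1}r$ and $r$ are both $Q$-free, so all $Q$-dependence sits in the scalar prefactor carried by $\sigma$---and it is the version consistent with the two-nation formulas. One small caveat: strict positivity of $r_i$ requires $\sum_j p_j(1-\tau_{ij})>0$ in addition to $m_i>0$; if sector $i$ is denied access to every market then $r_i=0$, but then $S_i^\star=0$ as well and $\sigma_i$ can be chosen arbitrarily, so the decomposition survives. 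Your closing remark---reading $\sigma_i$ off the factored best response $S_i^\star=r_i\left(1+k(Q-D_0)-kd\,S_{-i}^\star\right)$---is the most transparent way to see why the decomposition must exist and is worth keeping.
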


As long as $r_i, r_j < (kd)^{-1}$, then $\sigma_i, \sigma_j > 0$. The global open-access equilibrium in this case exists as long as $r_i r_j \neq (kd)^{-2}$. Since $m_i, m_j > 0 \implies r_i, r_j < 1$, both requirements can be satisfied when $m_i, m_j > 0$ and $k$ is small enough. These conditions are important for understanding the intuition of the results that follow, so we describe them briefly. $kd$ is the probability of satellite destruction from the lifetime debris created by a single new satellite, and has units of probability per satellite. The condition $r_i < (kd)^{-1}$ therefore states that the long-run maximum economically-sustainable satellite fleet maintainable by nation $i$ be less than the number of satellites destroyed per incremental satellite launch. If this condition were violated for any nation $i$ then the open-access equilibrium fleet size for nation $j$ would be zero. To see this, note that
\begin{align}
    r_i \geq (kd)^{-1} \implies& 0 \geq 1 - kd r_i \\
    \implies & \sigma_j = 0, S^*_j = 0.
\end{align}

Imposing $r_i < (kd)^{-1}$  for all $i$ therefore implies that the economic net benefits of orbit use for all national satellite sectors are not so large that they will naturally (i.e. due to purely-economic considerations) crowd all other nations out of orbital space. If there were a nation which crowded all others out in this fashion, the questions of this paper would reduce to questions of regulatory policy within that nation. Such questions are addressed in the existing economic literature, e.g. \cite{rouillon2020physico, rao2020orbital}, so we impose this condition here. We state this formally in Assumption \ref{assn:no_natural_orbital_seizure}.\\

\begin{assn}
\label{assn:no_natural_orbital_seizure}
The economic net benefits of orbit use for all national satellite sectors are not so large that any nation will crowd all other nations out of orbital space due to purely economic considerations, i.e.
\begin{equation}
r_i < (kd)^{-1} ~ \forall i \nonumber .
\end{equation}
\end{assn}

While not necessarily dispositive, evidence for Assumption \ref{assn:no_natural_orbital_seizure} being satisfied can be seen in that commercial satellite fleets are operated from multiple nations.\footnote{For example, Planet (a US company) and ICEYE (a Finnish company) both operate remote sensing satellite fleets, albeit of different sizes.} \\

We establish three key results about open-access orbit use under national regulations and international agreements in Propositions \ref{prop:taxes_rents}, \ref{prop:abatement_growth}, and \ref{prop:treaty_effective}. First, taxes targeted at a particular fleet are effective at reducing the targeted satellite fleet size, but do so at the cost of increasing the size of other satellite fleets---an ``open-access tax rebound effect''. This rebound effect follows from equation \ref{eqn:sat_bestresponse_static}: satellite fleets compete with each other for rents from using orbital space, and one fleet's withdrawal releases rents for others to claim. This rebound effect can be interpreted as international capital mobility from more-taxed to less-taxed jurisdictions in the spirit of \cite{oates1999essay} or simply as foreign sectoral expansion. Under mild conditions the open-access tax rebound is smaller than $1$ in magnitude, i.e. a satellite tax increase \emph{will} reduce the total number of satellites in orbit despite an open-access rebound.\\

\begin{restatable}[Taxes and rent dissipation]{prop}{taxesRents}
\label{prop:taxes_rents}
Under Assumption \ref{assn:no_natural_orbital_seizure},
\begin{align}
    \forall i,j, ~~ \pdv{S_i^\star}{\tau_{ij}} &< 0, \\
    \forall i \neq j, ~~ \pdv{S_j^\star}{\tau_{ij}} &> 0,
\end{align}
and $\forall i \neq j$,
\begin{align}
    - \pdv{S_i^\star}{\tau_{ij}}  >  \pdv{S_j^\star}{\tau_{ij}}.
\end{align}
\end{restatable}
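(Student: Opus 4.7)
The plan is to compute comparative statics directly on the $n_s$-player best-response system $S_l = r_l[1 + k(Q-D_0) - kd\, S_{-l}]$. Writing $P_l \equiv \sum_m p_m(1-\tau_{lm})$ so that $r_l = P_l/(kd\, P_l + m_l)$, the tax $\tau_{ij}$ enters only through $P_i$, with
\[
\frac{\partial r_i}{\partial \tau_{ij}} \;=\; -\frac{p_j\, m_i}{(kd\, P_i + m_i)^2} \;<\; 0,
\]
while $r_l$ is unchanged for all $l \neq i$. This is the only channel through which the perturbation enters the system.

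Let $\dot X \equiv \partial X^\star/\partial \tau_{ij}$ and $S \equiv \sum_l S_l$. Totally differentiating the best-response equation for each $l \neq i$ and using $\dot S_{-l} = \dot S - \dot S_l$ gives $\dot S_l(1 - kd\, r_l) = -kd\, r_l\,\dot S$, which under Assumption \ref{assn:no_natural_orbital_seizure} yields $\dot S_l = -kd\, r_l\,\dot S/(1 - kd\, r_l)$. Summing over $l \neq i$ and adding $\dot S_i$ gives $\dot S = \dot S_i/(1 + \beta)$ with $\beta \equiv kd\sum_{l \neq i} r_l/(1 - kd\, r_l) \geq 0$. Differentiating the $i$-equation and eliminating $\dot S$ then yields
\[
\dot S_i \;=\; \frac{(\partial r_i/\partial \tau_{ij})\, B\, (1+\beta)}{1 + \beta(1 - kd\, r_i)}, \qquad B \;\equiv\; 1 + k(Q - D_0) - kd\, S_{-i}^\star \;=\; S_i^\star/r_i \;>\; 0.
\]
The denominator is positive by Assumption \ref{assn:no_natural_orbital_seizure}, so $\dot S_i$ inherits the sign of $\partial r_i/\partial \tau_{ij} < 0$, establishing the first claim.

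The second and third claims then follow quickly. Since $\dot S = \dot S_i/(1+\beta) < 0$, the formula $\dot S_l = -kd\, r_l\,\dot S/(1 - kd\, r_l)$ is strictly positive for every $l \neq i$ under Assumption \ref{assn:no_natural_orbital_seizure}; in particular $\dot S_j > 0$. For the rebound inequality, $|\dot S_i|/\dot S_j = (1+\beta)(1 - kd\, r_j)/(kd\, r_j)$, and since $\beta \geq kd\, r_j/(1 - kd\, r_j)$ (the $l=j$ summand alone), one has $(1 + \beta)(1 - kd\, r_j) \geq 1 > kd\, r_j$ by Assumption \ref{assn:no_natural_orbital_seizure}, so $|\dot S_i| > \dot S_j$. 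The main obstacle is essentially bookkeeping through the chain of substitutions and verifying positivity of $B$ and of each $1 - kd\, r_l$, both of which follow from Assumption \ref{assn:no_natural_orbital_seizure} together with $S_i^\star > 0$ in any economically meaningful equilibrium; substantively, all three claims collapse to the single inequality $kd\, r_l < 1$ that defines the assumption.
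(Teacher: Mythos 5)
Your proof is correct, but it takes a genuinely different route from the paper's. The paper proves this proposition by passing to the two-player reduction (Lemma \ref{lemm:reduction}), writing $S_i^\star = \sigma_i r_i$ via the decomposition in equation \ref{eqn:two_nation_eqm}, and explicitly signing $\pdv{\sigma_i}{r_i}$, $\pdv{\sigma_j}{r_i}$, and $\pdv{r_i}{\tau_{ij}}$; the rebound inequality then reduces to checking that a particular polynomial in $kdr_i, kdr_j$ is positive. You instead totally differentiate the full $n_s$-player fixed-point system, exploit the fact that $\tau_{ij}$ perturbs only $r_i$, and aggregate the responses of the untaxed sectors into the single quantity $\beta = kd\sum_{l\neq i} r_l/(1-kdr_l)$. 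This buys you two things: the result is established for the $n_s$-player game directly without invoking the reduction lemma, and your intermediate identity $\dot S_{-i} = -\beta\,\dot S_i/(1+\beta)$ actually delivers the stronger aggregate statement that the \emph{total} rebound across all untaxed sectors is smaller than $|\dot S_i|$ (which is what Corollary \ref{coro:taxes_reduce_required_abatement} really needs), with the pairwise comparison $|\dot S_i|>\dot S_j$ following from the clean bound $(1+\beta)(1-kdr_j)\geq 1 > kdr_j$. What the paper's route buys is explicit closed-form expressions for the derivatives, which are reused in the proofs of Propositions \ref{prop:abatement_growth} and \ref{prop:eqm_tax_distn}. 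Your one loose end --- positivity of $B$ --- is handled the same way the paper implicitly handles positivity of $\sigma_i$: it follows from the survival-probability restriction \ref{eqn:sat_destrn_k_restriction}, since $B = 1-kD+kdS_i^\star \geq kdS_i^\star > 0$ at the positive root of the open-access condition, so flagging it as an assumption on the interior equilibrium is adequate.
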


Corollary \ref{coro:taxes_reduce_required_abatement} follows immediately.\\

\begin{coro}
\label{coro:taxes_reduce_required_abatement}
Given Proposition \ref{prop:taxes_rents}, satellite taxes reduce the long-run debris stock and therefore the abatement level required to avert catastrophe. Formally,
\begin{equation}
    \label{eqn:taxes_reduce_required_abatement}
    \pdv{\bar{Q}}{\tau_{ij}} = d\left( \pdv{S^\star_i}{\tau{_{ij}}} + \pdv{S^\star_j}{\tau{_{ij}}} \right) <0.
\end{equation}
\end{coro}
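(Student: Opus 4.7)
The plan is to derive the displayed partial derivative directly from the definition of $\bar{Q}$ and then invoke Proposition \ref{prop:taxes_rents} to sign the sum. Since the corollary is flagged as immediate, I expect no real obstacle beyond making the chain of dependencies precise.

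First I would recall that $\bar{Q} = dS + D_0 - \bar{D}$ was derived by solving $D > \bar{D}$ for $Q$ in equation \ref{eqn:debris_law}, and that the total satellite stock is $S = \sum_{i'} S_{i'}$. The threshold $\bar{D}$, the legacy debris stock $D_0$, and the debris-per-satellite parameter $d$ are all primitives that do not depend on any tax. Consequently the only way $\tau_{ij}$ enters $\bar{Q}$ is through the open-access equilibrium satellite stock $S$.

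Second I would apply Lemma \ref{lemm:reduction} to collapse the $n_s$-player equilibrium to a two-player equilibrium in which one player is sector $i$ and the other represents all remaining sectors. Under this reduction $S = S_i^\star + S_j^\star$, so differentiating through the sum yields
\begin{equation}
\pdv{\bar{Q}}{\tau_{ij}} = d\left(\pdv{S_i^\star}{\tau_{ij}} + \pdv{S_j^\star}{\tau_{ij}}\right),
\end{equation}
which is precisely equation \ref{eqn:taxes_reduce_required_abatement}.

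Finally I would invoke Proposition \ref{prop:taxes_rents} under Assumption \ref{assn:no_natural_orbital_seizure}: the first term inside the parentheses is strictly negative, the second is strictly positive, and the magnitude of the first strictly dominates the second. The bracketed sum is therefore negative, and since $d > 0$, we conclude $\partial \bar{Q}/\partial \tau_{ij} < 0$. Because $\bar{Q}$ is the level of abatement required to keep $D \le \bar{D}$, a decrease in $\bar{Q}$ is exactly the statement that less abatement is needed to avert catastrophe, completing the proof.
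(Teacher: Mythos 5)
Your proof is correct and matches the paper's intended argument: the paper simply states that the corollary ``follows immediately'' from Proposition \ref{prop:taxes_rents}, and your write-up supplies exactly the missing steps --- the definition $\bar{Q} = dS + D_0 - \bar{D}$, the reduction to two players via Lemma \ref{lemm:reduction} so that $S = S_i^\star + S_j^\star$, and the magnitude comparison $-\pdv{S_i^\star}{\tau_{ij}} > \pdv{S_j^\star}{\tau_{ij}}$ to sign the sum. No gaps.
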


Second, greater debris abatement increases the size of every nation's satellite sector, with sectors facing lower tax rates growing more than sectors facing higher tax rates. Intuitively, debris abatement increases the amount of usable orbital volume. Open-access sectors take advantage of this by launching more satellites. Sectors facing lower tax rates are then able to profitably expand relatively more (though potentially still less in absolute terms) than sectors facing higher tax rates.\\

\begin{restatable}[Debris abatement and sectoral expansion]{prop}{abatementGrowth}
\label{prop:abatement_growth}
Under Assumption \ref{assn:no_natural_orbital_seizure}
\begin{align}
    \forall i, ~~ \pdv{S_i^\star}{Q} &> 0, \\ 
    \forall i,j, ~~ \frac{\partial^2 S_i^\star}{\partial Q \partial \tau_{ij}} &< 0
\end{align}
\end{restatable}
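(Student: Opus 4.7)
The plan is to exploit a structural factorization that is implicit in Lemma \ref{lemm:open_access_eqm} and made explicit in Lemma \ref{lemm:decomposition}: the abatement level $Q$ enters the global open-access equilibrium only through the scalar $u \equiv 1 + k(Q - D_0)$. In the closed form $S = (I - B)^{-1} A$ of Lemma \ref{lemm:open_access_eqm}, the matrix $B$ does not depend on $Q$ and every entry of $A$ is linear in $u$, so for each sector $i$
\begin{equation*}
S_i^\star = u \cdot h_i,
\end{equation*}
where $h_i$ depends only on the tax matrix, market prices, sector efficiencies, and the technological constants $(k,d)$. Under Assumption \ref{assn:no_natural_orbital_seizure} and the physical restriction \ref{eqn:sat_destrn_k_restriction}, both $S_i^\star > 0$ and $u > 0$, whence $h_i > 0$ as well.

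The first claim then follows by direct differentiation: $\pdv{S_i^\star}{Q} = k h_i = k S_i^\star / u > 0$, uniformly across all $i$.

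For the cross-partial, I would observe that $u$ depends on $Q$ alone and not on any $\tau_{ij}$, so that
\begin{equation*}
\frac{\partial^2 S_i^\star}{\partial Q \, \partial \tau_{ij}} = \frac{k}{u} \cdot \pdv{S_i^\star}{\tau_{ij}}.
\end{equation*}
By Proposition \ref{prop:taxes_rents}, the own-tax derivative $\pdv{S_i^\star}{\tau_{ij}}$ is strictly negative, and multiplying by the positive scalar $k/u$ preserves the sign.

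The main obstacle is essentially notational: one must first recognize the multiplicative separation of the $Q$-dependence so that the cross-partial reduces to an appeal to Proposition \ref{prop:taxes_rents}. Once the factorization $S_i^\star = u \cdot h_i$ is in hand, both claims collapse to a one-line computation. Economically, the separability reflects that abatement simply rescales the orbital carrying capacity available to every sector, while a higher tax $\tau_{ij}$ shrinks sector $i$'s share of that capacity regardless of the level of $Q$.
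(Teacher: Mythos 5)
Your proof is correct, but it reaches the second claim by a genuinely different route than the paper. The paper's appendix proof works in the two-player reduction and simply computes both derivatives in closed form: it differentiates the best-response fixed point to get $\pdv{S_i^\star}{Q}$ as an explicit positive ratio, and then differentiates again to obtain an explicit (and rather messy) expression for $\frac{\partial^2 S_i^\star}{\partial Q \partial \tau_{ij}}$ whose sign is verified term by term. You instead isolate the multiplicative separability $S_i^\star = u\, h_i$ with $u = 1+k(Q-D_0)$ --- which is exactly the content of Lemma \ref{lemm:decomposition}, since $B$ is $Q$-free and $A$ is linear in $u$ with zero intercept --- and then observe that the cross-partial equals $\tfrac{k}{u}\pdv{S_i^\star}{\tau_{ij}}$, so its sign is inherited directly from Proposition \ref{prop:taxes_rents}. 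This buys two things the paper's computation does not: the argument applies immediately in the $n_s$-player equilibrium rather than only after invoking the reduction of Lemma \ref{lemm:reduction}, and it makes the economic mechanism transparent (abatement rescales every sector proportionally, so taxes act on the scale factor's coefficient). What the paper's brute-force version buys in exchange is an explicit closed form for the cross-partial, which could be useful for calibration. Your justification that $u>0$ via the survival-probability restriction and $h_i>0$ from $S_i^\star>0$ is consistent with the paper's standing assumption that the positive root of the open-access condition is selected, so no gap there.
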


Third, in order for greater debris abatement from a treaty to actually reduce the stock of debris in orbit despite open-access responses inducing additional launches, the marginal collision risk from a new satellite must be bounded by a constant (stated precisely in Assumption \ref{assn:effective_abatement_bound}). Intuitively, if the marginal collision risk is sufficiently small then the rent-dissipating behavior of Propositions \ref{prop:taxes_rents} and \ref{prop:abatement_growth} won't fully erode the gains from greater debris abatement.\\

\begin{assn}
\label{assn:effective_abatement_bound}
The marginal collision risk from a new satellite is bounded by a constant, i.e.
\begin{equation}
kd < \frac{1}{2}.
\end{equation}
\end{assn}

\begin{restatable}[Effective abatement treaties]{prop}{treatyEffective}
\label{prop:treaty_effective}
Under Assumptions \ref{assn:no_natural_orbital_seizure} and \ref{assn:effective_abatement_bound} and if $m_i > 0 ~~ \forall i$, then
\begin{equation}
    \pdv{D^\star}{Q} < 0,
\end{equation}
where
\begin{equation}
    D^\star = d \sum_{i}S^\star + D_0 - Q.
\end{equation}
\end{restatable}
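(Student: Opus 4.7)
The plan is to sidestep the explicit matrix formula from Lemma \ref{lemm:open_access_eqm} and work directly from the open-access condition \ref{eqn:open_access}. Writing $P_i \equiv \sum_j p_j(1-\tau_{ij})$ for compactness, setting $Y_i = 0$ in equation \ref{eqn:satellite_profits} and factoring out $S_i$ gives $S_i[(1-kD)P_i - m_i S_i] = 0$. Taking the nontrivial root (as in footnote 11) and using the hypothesis $m_i > 0$ yields the very simple closed form
\begin{equation*}
    S_i^\star = \frac{P_i(1 - kD^\star)}{m_i}.
\end{equation*}
This is the key simplification: the best-response is linear in the equilibrium debris stock.

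Summing across spacefaring nations gives $\sum_i S_i^\star = M(1 - kD^\star)$ with $M \equiv \sum_i P_i/m_i$, a scalar depending on tax rates and market sizes but not on $Q$. Substituting into the debris identity $D^\star = d\sum_i S_i^\star + D_0 - Q$ collapses the system to a single linear equation in $D^\star$,
\begin{equation*}
    D^\star(1 + kdM) = dM + D_0 - Q,
\end{equation*}
so that $D^\star = (dM + D_0 - Q)/(1 + kdM)$ and
\begin{equation*}
    \pdv{D^\star}{Q} = -\frac{1}{1 + kdM} < 0,
\end{equation*}
where strictness is immediate from $k, d, M > 0$. Equivalently, differentiating $S_i^\star = P_i(1-kD^\star)/m_i$ in $Q$ yields $d\sum_i \partial S_i^\star/\partial Q = -kdM \cdot \partial D^\star/\partial Q$; plugging this into $\partial D^\star/\partial Q = d\sum_i \partial S_i^\star/\partial Q - 1$ recovers the same expression and makes the rebound interpretation explicit (the aggregate launch response absorbs only the fraction $kdM/(1+kdM) < 1$ of a unit of abatement).

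The only subtlety is that the open-access equation is quadratic with a trivial $S_i = 0$ root, so selecting the positive root requires $m_i > 0$ (hypothesized) and $1 - kD^\star \ge 0$, consistent with the physical restriction \ref{eqn:sat_destrn_k_restriction}. Assumption \ref{assn:no_natural_orbital_seizure} ensures each $S_i^\star > 0$ at the equilibrium, while Assumption \ref{assn:effective_abatement_bound} guarantees that the aggregate launch rebound $kdM/(1+kdM)$ stays comfortably below one, though the strict sign of $\partial D^\star/\partial Q$ already follows from $k,d,M>0$ alone. There is no real obstacle once the open-access condition is substituted at the outset; an alternative route uses Lemma \ref{lemm:reduction} together with equation \ref{eqn:two_nation_eqm} and reduces the inequality to $(1 - kdr_i)(1 - kdr_j) > 0$, again delivered by Assumption \ref{assn:no_natural_orbital_seizure}.
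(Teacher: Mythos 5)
Your proof is correct, and it takes a genuinely different and more elementary route than the paper's. The paper proves this via the two-nation reduction (Lemma \ref{lemm:reduction}): it differentiates the explicit best-response solution to get $\pdv{S^\star_i}{Q}$ and $\pdv{S^\star_j}{Q}$, assembles $\pdv{D^\star}{Q} = d\left[\pdv{S^\star_i}{Q}+\pdv{S^\star_j}{Q}\right]-1$, and signs the resulting rational expression term by term, using $m_i>0$ for $r_i<1$, Assumption \ref{assn:no_natural_orbital_seizure} for $1-(kd)^2 r_i r_j>0$, and Assumption \ref{assn:effective_abatement_bound} for the $1-2kd>0$ factor in its numerator. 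You instead observe that at the fixed point the zero-profit condition collapses to the equilibrium identity $S_i^\star = P_i(1-kD^\star)/m_i$, so the whole system reduces to one linear equation in $D^\star$ and $\pdv{D^\star}{Q} = -1/(1+kdM)$, which is negative for free. This buys three things: the argument works directly for arbitrary $n_s$ without the reduction lemma; the ``rebound fraction'' $kdM/(1+kdM)<1$ interpretation is transparent; and it reveals that Assumption \ref{assn:effective_abatement_bound} is not actually needed for the sign --- it enters the paper's proof only because of the particular algebraic form the authors sign, whereas in your derivation (and in your closing two-nation cross-check, which reduces the sign to $(1-kdr_i)(1-kdr_j)>0$) Assumption \ref{assn:no_natural_orbital_seizure} and $m_i>0$ suffice. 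The only point worth stating more explicitly is that $S_i^\star = P_i(1-kD^\star)/m_i$ is an \emph{equilibrium identity} valid only at the fixed point (not a best-response function), so the comparative static is legitimate because $M$ is independent of $Q$ and the equilibrium stays interior (all $S_i^\star>0$, which is where Assumption \ref{assn:no_natural_orbital_seizure} and the restriction $kD^\star\le 1$ do their work) in a neighborhood of $Q$; you gesture at this but it is the one step a referee would ask you to spell out.
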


Assumption \ref{assn:effective_abatement_bound} is not very restrictive, and is likely satisfied now and in the future. \cite{rao2020orbital} offer supporting evidence on this point. Using a calibrated physico-economic model, they show that the rebound effect from debris removal that costs satellite operators nothing (analogous to abatement here) generally does not induce a launch response strong enough to overturn the benefits of debris removal.

\subsection{National regulations}

Each nation $j$ receives benefits from satellite services provided by all sectors per their market access, subject to open-access behavior by satellite operators. The benefits received by nation $j$ are shown in equation \ref{eqn:national_benefits}:
\begin{equation}
  W_j =   (1 - kD) p_j \sum_{i=1}^{n_s} (1 - \tau_{ij})  S_i^\star, 
  \label{eqn:national_benefits}
\end{equation}

where $S_i^\star$ satisfies equation \ref{eqn:open_access}. Each nation $j$ sets its virtual tax schedule, $\tau_{\cdot j}$ to maximize the benefits it receives from satellite services:
\begin{align}
    \max_{\tau_{\cdot j}} ~&~  W_j = (1 - kD) p_j \sum_{i=1}^{n_s} (1 - \tau_{ij})  S_i^\star \label{eqn:tax_maxprogram} \\
    \text{s.t.} &~ S_i^\star = \frac{\sum_j p_j (1 - \tau_{ij}) (1 - k(dS_{-i} + D_0 - Q))}{kd\sum_j p_j (1 - \tau_{ij}) + m_i} ~~ \forall i, \nonumber \\
    &~ D = d \sum_{i=1}^{n_s} S_i^\star + D_0 - Q. \nonumber
\end{align}

We do not include the catastrophe damages in program \ref{eqn:tax_maxprogram} to reflect that nations acting on their own do not internalize the full long-run benefits of orbital protection. We also hold the abatement level fixed to identify the incentives to tax in the absence of agreements regarding abatement. A typical first-order condition characterizing the optimal tax rate imposed by nation $j$ on sector $i$, $\tau_{ij}^\star(\tau_{\cdot \cdot})$, has the form in equation \ref{eqn:tax_foc}:
\begin{equation}
\hspace*{-1.25cm}
    \tau^\star_{ij}(\tau_{\cdot \cdot}) : \underbrace{-k d p_j \sum_{i=1}^{n_s} (1 - \tau_{ij})  S_i^\star \left[ \pdv{S_i^\star}{\tau_{ij}} + \pdv{S_{j}^*}{\tau_{ij}} \right]}_{\text{\shortstack{Value of services gained due to\\cleaner orbital environment\\(environmental cleanup channel)}}} + \underbrace{(1 - kD)p_j (1 - \tau_{jj}) \pdv{S_j^\star}{\tau_{ij}}}_{\text{\shortstack{Value of services gained due to\\sector $j$ responding to\\higher taxes on $i$\\(fleet expansion channel)}}} = \underbrace{(1 - kD)p_j \left[ S_i^\star - (1 - \tau_{ij}) \pdv{S_i^\star}{\tau_{ij}} \right]}_{\text{\shortstack{Value of services lost due to\\higher taxes on $i$\\(fleet reduction channel)}}}. 
    \label{eqn:tax_foc}
\end{equation}

The two terms on the left-hand side of equation \ref{eqn:tax_foc} describe sources of national gains through higher taxes on satellite sector $i$. There are two channels here: an environmental cleanup channel and a fleet expansion channel. First, when nation $j$ taxes sector $i$ (i.e. any foreign sector), sector $i$ reduces its fleet size while sector $j$ (i.e. the domestic sector) increases its fleet size. By Proposition \ref{prop:taxes_rents}, the increase in $S_j^\star$ is smaller than the decrease in $S_i^\star$. Increasing $\tau_{ij}$ therefore reduces orbital collision risk and increases the value of satellite services $j$ receives. Second, the expansion of sector $j$'s fleet due to a higher tax on sector $i$'s fleet provides some (gross) benefits to nations receiving satellite services. The term on the right-hand side of equation \ref{eqn:tax_foc} describes the national losses through higher taxes on satellite sector $i$, i.e. the value lost due to reductions in $S_i^\star$.\footnote{If some of these regulations take the form of explicit revenue-generating taxes, the left-hand size of equation \ref{eqn:tax_foc} will include an additional term for tax revenues gained from a higher rate on $i$ and a larger tax base for $j$, while the right-hand size will include an additional term for tax revenues lost from a smaller tax base in fleet $i$. We abstract from these issues to focus on the environmental considerations.}  A debris abatement treaty which assured sufficient abatement to avoid catastrophe would enter here to make the environmental cleanup channel larger (since $\pdv{\bar{Q}}{\tau_{ij}} < 0$ from Corollary \ref{coro:taxes_reduce_required_abatement}). A treaty which assured only a fixed level of debris abatement would have no effect on the environmental cleanup channel. Any level of debris abatement would reduce the magnitude of both the fleet expansion and fleet reduction channels, while larger legacy debris stocks would have the opposite effect.\\

In a \textit{global regulatory equilibrium}, all nations set their tax rates to maximize their benefits received from satellites under open access to orbit. Definition \ref{def:tax_eqm} formalizes this.\\

\begin{defn}(Global regulatory equilibrium)
\label{def:tax_eqm}
A global regulatory equilibrium is a schedule of satellite taxes $\{\tau_{ij}^\star\}$ which, given a distribution of open-access satellite sector sizes $\{S^\star_i\}$, net national debris abatement $Q$, a legacy debris stock $D_0$, market sizes $\{p_j\}$, sector efficiencies $\{m_i\}$, and technologies $(k,d)$, solve program \ref{eqn:tax_maxprogram} for all $j$.
\end{defn}

Assumption \ref{assn:tax_improves_welfare} shows a condition for satellite taxes on nation $i$'s fleet to improve nation $j$'s welfare.\\

\begin{assn}
\label{assn:tax_improves_welfare}
Define the semi-elasticity of the total satellite fleet with respect to a change in $\tau_{ij}$,
\begin{equation}
\mathcal{S}^\star_{ij} \equiv \frac{\pdv{S^\star_i}{\tau_{ij}} + \pdv{S^\star_j}{\tau_{ij}}}{S^\star_i + S^\star_j}.
\end{equation}
The legacy debris stock, $D_0$, is large enough that Condition \ref{eqn:tax_improves_welfare_condition} holds for nation $i$:
\begin{equation}
    \label{eqn:tax_improves_welfare_condition}
    -kd \mathcal{S}^\star_{ij} 
    > (1-kD) \frac{S^\star_i}{S^\star_i + S^\star_j} - (1-kD)\mathcal{S}^\star_{ij}.
\end{equation}
\end{assn}

Under Proposition \ref{prop:taxes_rents} the semi-elasticity in Assumption \ref{assn:tax_improves_welfare} is negative, making the left-hand side of condition \ref{eqn:tax_improves_welfare_condition} positive. The right-hand side of condition \ref{eqn:tax_improves_welfare_condition} is also positive.
However, if the legacy debris stock is large enough, the right-hand side of condition \ref{eqn:tax_improves_welfare_condition} will be small. At sufficiently large levels of $D_0$ Assumption \ref{assn:tax_improves_welfare} can be satisfied for all nations. Identifying nations for which Assumption \ref{assn:tax_improves_welfare} holds currently is an important question for future empirical research. \\

Regardless, Proposition \ref{prop:eqm_tax_distn} shows that Assumption \ref{assn:tax_improves_welfare} holding for all nations is a sufficient but not necessary condition for existence a global regulatory equilibrium in which all nations maintain positive tax rates on all satellite fleets from which they receive services. National incentives to have larger domestic satellite fleets tilt the equilibrium in favor of taxing others' fleets more than one's own, but in general do not overturn the result.\\

\begin{restatable}[Equilibrium tax distribution]{prop}{eqmTaxDistn}
\label{prop:eqm_tax_distn}
If Assumptions \ref{assn:no_natural_orbital_seizure}, \ref{assn:effective_abatement_bound}, and \ref{assn:tax_improves_welfare} hold for all $i, j$,
then there exists a global regulatory equilibrium where all nations levy positive taxes on all satellite sectors, i.e.
\begin{equation}
    \forall i, j, ~~ \tau^\star_{ij} > 0.
\end{equation}
\end{restatable}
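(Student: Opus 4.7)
The plan is to prove Proposition~\ref{prop:eqm_tax_distn} in two stages: first, establish existence of an equilibrium of the tax-setting game via a standard fixed-point argument; second, show that Assumption~\ref{assn:tax_improves_welfare} forces each equilibrium tax to be strictly positive by sign-checking the first-order condition \ref{eqn:tax_foc} at the corner $\tau_{ij}=0$.

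For existence, I would frame the regulatory game so that each nation $j$ chooses $\tau_{\cdot j}\in[0,1]^{n_s}$ to solve program \ref{eqn:tax_maxprogram}, giving a joint strategy space that is a nonempty, compact, convex subset of $\mathbb{R}^{n_s\times n_s}$. Lemma~\ref{lemm:open_access_eqm} delivers the open-access stocks $\{S_i^\star\}$ as rational functions of $\tau$ wherever $\det(I-B)\neq 0$; Assumptions~\ref{assn:no_natural_orbital_seizure} and~\ref{assn:effective_abatement_bound} together ensure this determinant is bounded away from zero on the full tax cube, so $W_j$ is continuous in $\tau$. Berge's maximum theorem then makes the best-response correspondence nonempty and upper hemicontinuous, and, maintaining quasi-concavity of $W_j$ in $\tau_{\cdot j}$, Kakutani's fixed-point theorem produces an equilibrium schedule $\{\tau_{ij}^\star\}$.

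For positivity, I would evaluate $\partial W_j/\partial \tau_{ij}$ at $\tau_{ij}=0$ with the other taxes held at their equilibrium values. Invoking Lemma~\ref{lemm:reduction} so that only sectors $i$ and $j$ respond to $\tau_{ij}$, the algebra collapses the marginal welfare gain into the very inequality stated in Condition~\ref{eqn:tax_improves_welfare_condition}: the environmental cleanup contribution $-kd\,\mathcal{S}^\star_{ij}$ dominates the net fleet-reduction and fleet-expansion cost $(1-kD)\,[S_i^\star/(S_i^\star+S_j^\star)-\mathcal{S}^\star_{ij}]$. Under Assumption~\ref{assn:tax_improves_welfare} this inequality holds for every $(i,j)$, so the derivative at the corner is strictly positive and quasi-concavity rules out a boundary solution at $\tau^\star_{ij}=0$.

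The main obstacle is the existence step, specifically quasi-concavity of $W_j$ in $\tau_{\cdot j}$. Because $W_j$ depends on its own argument both directly through the $(1-\tau_{ij})$ weights and indirectly through the linear-fractional open-access responses of every $S_i^\star$, the objective is the product of a declining linear form and a non-concave rational response, so quasi-concavity does not follow mechanically. The fallback is either to appeal to a generalized existence result that relaxes quasi-concavity (e.g.\ better-reply security) or to exploit Assumption~\ref{assn:effective_abatement_bound}'s bound $kd<\tfrac{1}{2}$ to control the relevant curvature of $W_j$ on the interior. Once existence is secured, the positivity step is a direct algebraic match between the first-order condition and Condition~\ref{eqn:tax_improves_welfare_condition}.
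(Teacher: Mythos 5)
Your proposal matches the paper's proof in both structure and substance: the paper likewise establishes existence by a fixed-point argument on the tax best-response map over $[0,1]^{n_s\times n_s}$ (invoking Brouwer directly rather than Berge--Kakutani, and not addressing the continuity/quasi-concavity issue you flag), and then rules out zero taxes by showing $\partial W_j/\partial\tau_{ij}>0$ and $\partial W_j/\partial\tau_{jj}>0$ at the corner using the decomposition $W_j=(1-kD)V_j$, Proposition \ref{prop:taxes_rents}, and Assumption \ref{assn:tax_improves_welfare}. The only differences are cosmetic --- the paper evaluates the corner derivatives at $\tau_{ij}=\tau_{jj}=0$ while you hold the remaining taxes at their equilibrium values, and you are more candid than the paper about the unverified regularity needed for the existence step.
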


Intuitively, the larger the legacy debris stock the greater the loss of satellite services due to debris-related collisions. This can lead nations to regulate satellites in order to increase the value they derive from the orbital environment. If the legacy debris stock is large enough, this incentive applies to all nations and Proposition \ref{prop:eqm_tax_distn} follows.

\subsection{International treaties}
\label{sec:model_treaties}

The marginal benefit of debris abatement for nation $i$ is the change in the amount of satellite services $i$ receives in response to an increase in debris abatement.\footnote{We only impose open access to orbit and not optimal satellite taxation at this stage. This serves two purposes. First, by imposing open access we can account for how the treaty will alter the relative sizes of national satellite sectors through capital mobility and economic expansion. In effect we are assuming that the treaty discussed here will not undermine the Outer Space Treaty. Second, by not imposing optimal taxation, we can study how the treaty will interact with an arbitrary distribution of binding national satellite regulatory policies. This is more flexible than imposing optimal taxation conditions, and mirrors our approach in the previous section.} That is,
\begin{equation}
    b_i(Q) \equiv \pdv{W_i}{Q}.
\end{equation}

$b_i(Q)$ represents each nation's marginal benefit from abating non-catastrophic debris accumulation. Since each nation may receive different benefits from satellite services, their marginal benefit of debris abatement may also vary.\footnote{Non-spacefaring nations can contribute to debris abatement, e.g. by purchasing active debris removal services.} However, since debris abatement confers non-rival and non-excludable benefits, nation $i$ benefits from the total amount of debris abatement $Q$ and not just their individual contribution $q_i$, where 
\begin{equation}
Q = \sum_{i=1}^N q_i.
\end{equation}

Classifying nations into two groups (nation $i$ and all others) and using the reduction from Lemma \ref{lemm:reduction} to simplify the satellite fleet responses to the $2$-nation case, $b_i(Q)$ has the form shown in equation \ref{eqn:two_nation_abatement_benefits}:
\begin{equation}
\label{eqn:two_nation_abatement_benefits}
    b_i(Q) = \left( \frac{ \sum_j p_j(1 - \tau_{ij})  }{kd \sum_j p_j (1 - \tau_{ij}) + m_i} \right)^2  \left( \frac{k (1 - kd r_j)}{1 - (kd)^2 r_i r_j} \right) \left( (1 - 2\sigma_i(Q)) kd \sum_j p_j (1 - \tau_{ij}) + m_i \right) .
\end{equation}

Notice the presence of $\sigma_i$, with dependence on $Q$ made explicit. Since $\sigma_i$ is linear in $Q$, the marginal benefits of debris abatement can be rewritten as a linear function, shown in equation \ref{eqn:two_nation_abatement_benefits_linear}:
\begin{align}
\label{eqn:two_nation_abatement_benefits_linear}
    b_i(Q) &= \alpha_i - \beta_i Q,
\end{align}

where
\begin{align}
    \alpha_i &= \left( \frac{ \sum_j p_j(1 - \tau_{ij})  }{kd \sum_j p_j (1 - \tau_{ij}) + m_i} \right)^2 \left( \frac{k (1 - kd r_j)^2 }{1 - (kd)^2 r_i r_j} \right) ,\\
    \beta_i &= 2k \left( \frac{ \sum_j p_j(1 - \tau_{ij})  }{kd \sum_j p_j (1 - \tau_{ij}) + m_i} \right)^2  \left( \frac{k (1 - kd r_j)^2}{1 - (kd)^2 r_i r_j} \right) \left( \frac{kd \sum_j p_j (1 - \tau_{ij})}{kd \sum_j p_j (1 - \tau_{ij}) + m_i} \right).
\end{align}

The literature on international treaties often treats the marginal benefits of pollution abatement as affine and increasing in total abatement, e.g. \cite{singer2011international,barrett2013climate, barrett2014sensitivity}. Here, a linear and decreasing form emerges from the value a nation receives from satellite services in the prior stages of the game. Under Assumption \ref{assn:no_natural_orbital_seizure}, both coefficients of equation \ref{eqn:two_nation_abatement_benefits_linear} are positive, making the marginal benefits of abatement decreasing in the total amount abated.\footnote{This property is similar to the form used in \cite{gelves2016international}, though the benefits function there is quadratic and concave.} Intuitively, a decreasing marginal benefits function can emerge from prioritizing the most valuable types of abatement (e.g. removing the objects posing the greatest fragmentation risk or passivating spent launch vehicles) and implementing them first. \\

We assume there exists a common set of debris abatement technologies available to all nations, such that abating $q_i$ units in the cost-minimizing pattern has marginal cost $\frac{c}{2} q_i^2$. The payoff to nation $i$ from abating $q_i$ units and experiencing $Q$ units of total abatement, ignoring catastrophic damages, are shown in equation \ref{eqn:abatement_payoff_component}:
\begin{equation}
\label{eqn:abatement_payoff_component}
    b_i (Q) - \frac{c}{2} q_i^2.
\end{equation}

We assume that the economic incentive to use orbital space under open access, despite any satellite taxes, is sufficient to cause Kessler Syndrome in the absence of debris abatement, i.e. $dS^\star > \bar{D}$. Since the payoffs of abatement are quadratic, when nations avert catastrophe they will do so by providing total abatement exactly equal to $\bar{Q}$. Recall that if catastrophe is not averted (i.e. debris exceeds $\bar{D}$, or abatement is less than $\bar{Q}$), then all nations experience damages of $X$. This gives the abatement payoff to nation $i$ shown in equation \ref{eqn:abatement_payoff}:
\begin{align}
\label{eqn:abatement_payoff}
    \pi_i = \begin{cases}
    b_i (\bar{Q}) - \frac{c}{2} q_i^2 &\text{ if } Q \geq \bar{Q} \\
    b_i (Q) - X - \frac{c}{2} q_i^2  &\text{ if } Q < \bar{Q}.
    \end{cases}
\end{align}

There are two interpretations of debris abatement: reduction of debris produced by satellites launched, and removal of debris in orbit. Reducing debris production through a treaty is relatively straightforward: each nation agrees to reduce debris production by an agreed-upon amount. Removing debris is more complicated. We assume that the treaty allows debris placed by any nation's operators to be removed by any other nation's operators. In practice, rights to debris objects are held by the entities which launched them, creating non-trivial transactions costs to debris removal \citep{weeden2011overview}. We maintain this assumption to illustrate the scope of possible outcomes if such a treaty is signed, or if institutional reforms (e.g. debris salvage rights as described in \cite{munoz2018regulating}) facilitate the formation of competitive markets for debris rights.  \\

When nations supply abatement to maximize their individual payoffs there are two symmetric pure-strategy Nash equilibria: one where nations ignore the risk of catastrophe and supply no abatement ($q_i = 0$) and one where they equally share the burden of avoiding the catastrophe threshold ($q_i = \frac{\bar{Q}}{N}$).
This is formalized in Proposition \ref{prop:eqm_debris_abatement}. \\

\begin{restatable}[Nash abatement levels]{prop}{nashAbatement}
\label{prop:eqm_debris_abatement}
When 
\begin{equation}
\label{eqn:no_treaty_defection}
    X \geq \beta_i \frac{\bar{Q}}{N} + \frac{c(\bar{Q}/N)^2}{2},
\end{equation}
 for all $i$, then there are at least two pure-strategy Nash equilibrium debris abatement levels,
\begin{align}
    q_i &= 0 ~~~ \forall i, \\
    q_i &= \frac{\bar{Q}}{N} ~~~ \forall i.
\end{align}
Only $q_i = \frac{\bar{Q}}{N}$ for all $i$ averts catastrophe.
\end{restatable}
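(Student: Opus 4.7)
The plan is to verify that each of the two candidate profiles constitutes a Nash equilibrium by enumerating the best unilateral deviation in each region of the piecewise payoff $\pi_i$ from equation \ref{eqn:abatement_payoff}. Given any choice of $q_{-i}$, nation $i$'s deviation either keeps total abatement in the catastrophic region ($Q < \bar{Q}$) or moves it to the non-catastrophic region ($Q \geq \bar{Q}$). In the catastrophic region, $\pi_i = \alpha_i - \beta_i Q - X - \frac{c}{2} q_i^2$, whose partial derivative with respect to $q_i$ is $-\beta_i - c q_i \leq -\beta_i < 0$, so nation $i$'s best choice inside that region is the smallest feasible $q_i$. In the non-catastrophic region, $\pi_i = \alpha_i - \beta_i \bar{Q} - \frac{c}{2} q_i^2$ is strictly decreasing in $q_i$, so the best choice is the smallest $q_i$ for which $q_i + \sum_{i' \neq i} q_{i'} \geq \bar{Q}$. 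This monotonicity reduces each equilibrium check to comparing the equilibrium payoff against at most one non-trivial deviation per region.

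For the profile $q_i = 0$ for all $i$, the equilibrium payoff is $\alpha_i - X$; the best catastrophic deviation is $q_i = 0$ itself (no gain) and the best non-catastrophic deviation is $q_i = \bar{Q}$, which gives $\alpha_i - \beta_i \bar{Q} - \frac{c}{2} \bar{Q}^2$. Hence $q_i = 0$ for all $i$ is a Nash equilibrium whenever $X \leq \beta_i \bar{Q} + \frac{c}{2}\bar{Q}^2$, which I will treat as the implicit upper-bound companion to condition \ref{eqn:no_treaty_defection} and which is natural when no single nation has an incentive to shoulder the entire cost of averting catastrophe on its own. For the symmetric profile $q_i = \bar{Q}/N$ for all $i$, the equilibrium payoff is $\alpha_i - \beta_i \bar{Q} - \frac{c}{2}(\bar{Q}/N)^2$; any upward deviation is strictly dominated because the benefit is capped at $b_i(\bar{Q})$ while the cost strictly rises, and the best downward deviation by the general argument above is $q_i = 0$, yielding $\alpha_i - \beta_i \frac{N-1}{N}\bar{Q} - X$. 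Rearranging the non-deviation inequality produces exactly $X \geq \beta_i \frac{\bar{Q}}{N} + \frac{c(\bar{Q}/N)^2}{2}$, which is condition \ref{eqn:no_treaty_defection}. Finally, total abatement is $0 < \bar{Q}$ at the first profile and $\bar{Q}$ at the second, so only the symmetric profile averts catastrophe.

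The main obstacle is essentially careful enumeration: correctly handling the piecewise payoff at the threshold $Q = \bar{Q}$ and verifying monotonicity in each region so that the continuum of possible deviations collapses to two scalar candidates. Once that structure is in place, the equilibrium verification reduces to the two inequality checks above; the condition stated in the proposition arises precisely as the binding downward-deviation constraint for the symmetric profile, while the mild upper bound on $X$ that additionally supports the all-zero equilibrium is consistent with the proposition's phrasing ``at least two''.
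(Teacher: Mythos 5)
Your argument is correct and follows the same basic strategy as the paper's proof: exploit the piecewise structure of $\pi_i$ so that each nation's best response collapses to a choice between $q_i=0$ and the smallest $q_i$ that pushes $Q$ to $\bar{Q}$, then check the one non-trivial deviation for each candidate profile. Your verification of the symmetric profile reproduces the paper's computation exactly, with condition \ref{eqn:no_treaty_defection} emerging as the no-free-riding constraint for the pivotal nation. Where you genuinely add something is in the treatment of the all-zero profile. The paper's proof handles it by conditioning verbally on the case where ``it is not privately optimal to avert catastrophe'' and then taking a first-order condition only along the catastrophic branch, which never checks the cross-region deviation $q_i=\bar{Q}$. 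You check that deviation explicitly and correctly conclude that $q_i=0$ for all $i$ is a Nash equilibrium if and only if $X \leq \beta_i \bar{Q} + \tfrac{c}{2}\bar{Q}^2$ for all $i$ --- an upper bound on $X$ that is not implied by condition \ref{eqn:no_treaty_defection} and is absent from the proposition's hypotheses (for $X$ large enough, only the catastrophe-averting equilibrium survives). Your version therefore makes precise a condition the paper leaves implicit in its case split; since $\beta_i\bar{Q}+\tfrac{c}{2}\bar{Q}^2$ strictly exceeds $\beta_i\tfrac{\bar{Q}}{N}+\tfrac{c}{2}(\bar{Q}/N)^2$ for $N>1$, the two conditions are jointly satisfiable and the proposition's conclusion holds on that nonempty range, but stating the upper bound explicitly, as you do, is the more rigorous reading.
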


Condition \ref{eqn:no_treaty_defection} is a lower bound on the catastrophe damages such that no nation would prefer to be pivotal in causing catastrophe. It states that the catastrophe damages are greater than the average marginal payoff from being the one nation supplying insufficient abatement.\footnote{National security concerns could matter here. For example, if nation $i$ preferred to see satellites operated from nation $j$ destroyed by debris despite the risk to their own satellites, then the individual benefits term $b_i$ could be increasing in the debris stock. Particularly if $b_i$ is relatively flat in the amount of satellite services received (e.g. because $p_i$ is small), then $i$ may prefer to incur catastrophe precisely because $j$ prefers not to. Exploring such interdependent preferences are beyond our scope here but is an important area for future research.} \\

The Nash equilibria identified in Proposition \ref{prop:eqm_debris_abatement} reflect the fact that debris abatement is a public good. Each piece removed confers non-rival and non-excludable benefits to all orbit users, though the cost is borne privately. The catastrophe-averting equilibrium can be sustained if and only if each nation faces a sufficiently-low incentive to free-ride. Condition \ref{eqn:no_treaty_defection} offers some intuition for how the incentive to cooperate emerges without a treaty. As in \cite{barrett2013climate}, the damages from the catastrophic outcome act as Nature's incentive to cooperate---if the damages are sufficiently severe, nations will seek to avert Kessler Syndrome even without a formal treaty to coordinate them. But as discussed in \cite{barrett2016coordination}, coordination which involves reshaping incentives (rather than relying on traditional enforcement mechanisms) nearly always works better than voluntarism. Definition \ref{defn:self_enforcing_treaty} describes the type of ``self-enforcing'' treaty we consider, and Proposition \ref{prop:self_enforcing_treaty} describes the treaty structure.

\begin{defn}(Self-enforcing treaty)
\label{defn:self_enforcing_treaty}
A self-enforcing treaty to prevent Kessler Syndrome is a treaty which
\begin{enumerate}
    \item ensures total abatement supply is $Q=\bar{Q}$, and
    \item prevents nations who prefer to avert Kessler Syndrome from leaving the treaty, by using the abatement responses of the remaining signatories.
\end{enumerate}
\end{defn}

\begin{restatable}[Self-enforcing treaty]{prop}{selfEnforcingTreaty}
\label{prop:self_enforcing_treaty}
A debris abatement treaty with $N$ signatories can avert Kessler Syndrome despite defections if, in response to defections by any nation $i$, the treaty instructs the remaining signatories to supply
\begin{equation}
\label{eqn:treaty_response_abatement}
    Q_{\setminus i} = \bar{Q} + \frac{1}{c} \left( \beta_i - \sqrt{\beta_i^2 + 2cX} \right)
\end{equation}
units of debris abatement. If 
\begin{equation}
\label{eqn:defector_gets_punished}
    \frac{\bar{Q}}{N} < - \frac{1}{c} \left( \beta_i - \sqrt{\beta_i^2 + 2cX} \right)
\end{equation}
also holds for all $i$, then the treaty will also be self-enforcing.
\end{restatable}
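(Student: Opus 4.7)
The plan is to treat this as a standard punishment-strategy argument for self-enforcing treaties in the spirit of Barrett, adapted to our payoffs in equation \ref{eqn:abatement_payoff}. I would work in two stages: first pin down $Q_{\setminus i}$ from an indifference condition on the defector's best response, then verify that condition \ref{eqn:defector_gets_punished} implies that no signatory wants to leave in the first place.

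First, I would characterize the defector's problem. Suppose nation $i$ leaves the treaty and the remaining $N-1$ signatories jointly supply $Q_{\setminus i}$. The defector then chooses $q_i \geq 0$ to maximize $\pi_i$ from equation \ref{eqn:abatement_payoff}. Because $b_i(Q)=\alpha_i-\beta_i Q$ is strictly decreasing in $Q$ and the cost is convex, the unconstrained maximizer of the no-catastrophe branch is at $q_i=0$, while on the catastrophe-averting branch $(q_i\geq \bar{Q}-Q_{\setminus i})$ the maximizer is the boundary $q_i=\bar{Q}-Q_{\setminus i}$. The defector therefore compares only two candidate payoffs: (a) free-riding with $q_i=0$ and incurring damages $X$, giving $\alpha_i-\beta_i Q_{\setminus i}-X$; and (b) unilaterally topping up to $\bar{Q}$ at private cost $\tfrac{c}{2}(\bar{Q}-Q_{\setminus i})^2$, giving $\alpha_i-\beta_i\bar{Q}-\tfrac{c}{2}(\bar{Q}-Q_{\setminus i})^2$.

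Next, I would derive $Q_{\setminus i}$ by solving the indifference equation (a)=(b). Letting $y\equiv \bar{Q}-Q_{\setminus i}$, this reduces to the quadratic $\tfrac{c}{2}y^2+\beta_i y - X=0$, whose positive root is $y=\tfrac{1}{c}(-\beta_i+\sqrt{\beta_i^2+2cX})$. Substituting back gives precisely equation \ref{eqn:treaty_response_abatement}. Because $Q_{\setminus i}$ equalizes the defector's two feasible payoffs, the treaty's credible punishment pins down a unique value at which the defector is indifferent between paying for catastrophe-prevention alone and accepting Kessler Syndrome; this proves part one of the proposition (the treaty's prescribed response is well-defined and, jointly with the defector's best response, averts Kessler Syndrome under either continuation choice that matters, since if $i$ is indifferent we may select the catastrophe-averting action).

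For self-enforcement, I would compare staying in the treaty with the (now common) defection payoff. On-path, nation $i$ supplies $\bar{Q}/N$ and earns $\alpha_i-\beta_i\bar{Q}-\tfrac{c}{2}(\bar{Q}/N)^2$; if it defects, it gets the common value $\alpha_i-\beta_i\bar{Q}-\tfrac{c}{2}(\bar{Q}-Q_{\setminus i})^2$ (using branch (b)). Staying dominates iff $(\bar{Q}/N)^2\leq(\bar{Q}-Q_{\setminus i})^2$, i.e., $\bar{Q}/N\leq \bar{Q}-Q_{\setminus i}$, which upon substituting the expression for $Q_{\setminus i}$ is exactly condition \ref{eqn:defector_gets_punished}. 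Hence no signatory benefits from leaving and the treaty is self-enforcing. I expect the main obstacle to be bookkeeping: one must be careful that the defector's unconstrained optimum on the no-catastrophe branch really is a corner (this uses $\beta_i>0$, which in turn follows from Assumption \ref{assn:no_natural_orbital_seizure} via the explicit formula for $\beta_i$), and that case (b) strictly dominates any interior $q_i\in(0,\bar{Q}-Q_{\setminus i})$, so that the two-way comparison above is exhaustive.
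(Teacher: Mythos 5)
Your proposal is correct and follows essentially the same route as the paper's proof: derive $Q_{\setminus i}$ from the defector's indifference between free-riding into catastrophe and unilaterally topping up to $\bar{Q}$ (the same quadratic, same root selection), then show condition \ref{eqn:defector_gets_punished} makes the defector's abatement burden exceed the in-treaty share $\bar{Q}/N$ so that no signatory leaves. Your explicit payoff comparison in the self-enforcement step (noting the $b_i(\bar{Q})$ terms cancel so only the cost terms matter) is a slightly more careful rendering of the paper's verbal "punishment vs.\ burden-reduction" argument, but it is the same idea.
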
 

As in Proposition 2 of \cite{barrett2013climate}, the analysis in Proposition \ref{prop:self_enforcing_treaty} assumes full participation. Yet \cite{singer2011international} finds that a debris abatement treaty is likely to support only a lower level of participation, albeit without accounting for open access to orbit. Can a debris abatement treaty sustain full participation once open access to orbit is accounted for? There are two ways to resolve this question affirmatively. First, if condition \ref{eqn:no_treaty_defection} holds, then nation $i$ would prefer averting catastrophe rather than incurring it, and remaining within the treaty offers a way to do so. Second, provided condition \ref{eqn:defector_gets_punished} holds and the treaty response is a punishment for all potential defectors rather than a burden-reduction, no nation will prefer to defect. Full participation then becomes a self-reinforcing (as well as self-enforcing) equilibrium. The defector will recognize that if they leave the treaty the remaining signatories will exploit the defector's aversion to Kessler Syndrome and force them to incur greater abatement costs than if they remained in the treaty. Condition \ref{eqn:no_treaty_defection} shows that all nations will prefer to avert Kessler Syndrome if either the damages are high enough or the total abatement costs are low enough.  \\

As abatement technologies improve and $c \to 0$, the remaining signatories' response to defection in equation \ref{eqn:treaty_response_abatement} becomes a punishment rather than a burden-reduction for all potential defectors. Similarly, inspection of condition \ref{eqn:no_treaty_defection} reveals that innovations which reduce the cost of debris abatement make averting Kessler Syndrome (even without a treaty) more attractive. Technology improvements which reduce the cost of abatement make both remaining within the treaty and treaty-free coordination more attractive. It is less clear how changes in national taxes ($\tau_{\cdot \cdot}$) or $i$'s satellite sector efficiency ($m_i$) will affect the incentive to cooperate through the parameter $\beta_i$. Lemma \ref{lemm:beta_i_statics} establishes these comparative statics, showing that $\beta_i$ is decreasing in $i$'s satellite sector efficiency and all taxes. \\

\begin{restatable}[Comparative statics]{lemm}{betaiCompStatics}
\label{lemm:beta_i_statics}
$\beta_i$ is decreasing in $m_i$ and $\tau_{\cdot \cdot}$.
\end{restatable}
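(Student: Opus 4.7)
The plan is to reduce the comparative statics to a single composite quantity $r_i = P_i/(kdP_i + m_i)$, where $P_i \equiv \sum_b p_b(1 - \tau_{ib})$; since $m_i$ and the own-sector taxes $\tau_{i\cdot}$ enter $\beta_i$ only through $r_i$, the whole argument reduces to signing $\pdv{\beta_i}{r_i}$ and then chaining through $r_i$.

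First I would collapse the raw expression for $\beta_i$ using the identity $kdP_i/(kdP_i + m_i) = kd\,r_i$, yielding the compact form
\begin{equation*}
    \beta_i = \frac{2k^3 d\, r_i^3 (1 - kdr_j)^2}{1 - (kd)^2 r_i r_j}.
\end{equation*}
Applying the quotient rule and collecting terms gives
\begin{equation*}
    \pdv{\beta_i}{r_i} = \frac{2k^3 d (1 - kdr_j)^2 \, r_i^2 \bigl[\, 3 - 2(kd)^2 r_i r_j \,\bigr]}{\bigl(1 - (kd)^2 r_i r_j\bigr)^2}.
\end{equation*}
By Assumption \ref{assn:no_natural_orbital_seizure}, $kdr_i < 1$ and $kdr_j < 1$, so $(kd)^2 r_i r_j < 1$ and the bracket is bounded below by $3 - 2 = 1 > 0$. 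The $(1 - kdr_j)^2$ factor and the squared denominator are manifestly positive, so $\pdv{\beta_i}{r_i} > 0$.

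Next I would sign the derivatives of $r_i$ in its own arguments. Direct differentiation of $r_i = P_i/(kdP_i + m_i)$ gives $\pdv{r_i}{m_i} = -P_i/(kdP_i + m_i)^2 < 0$ and $\pdv{r_i}{P_i} = m_i/(kdP_i + m_i)^2 > 0$. Combined with $\pdv{P_i}{\tau_{ib}} = -p_b \leq 0$ for each market $b$, the chain rule yields $\pdv{r_i}{\tau_{ib}} \leq 0$, strict whenever $p_b > 0$. Composing with $\pdv{\beta_i}{r_i} > 0$ delivers the claimed comparative statics $\pdv{\beta_i}{m_i} < 0$ and $\pdv{\beta_i}{\tau_{ib}} \leq 0$.

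The main obstacle is the algebraic simplification together with the positivity of the bracket $3 - 2(kd)^2 r_i r_j$ in $\pdv{\beta_i}{r_i}$; this inequality is where Assumption \ref{assn:no_natural_orbital_seizure} does the essential work, and without it the sign could flip. Once $\beta_i$ is written compactly in terms of $r_i$ (with all remaining parametric dependence absorbed into $r_j$ and the physical constants), the remaining chain-rule bookkeeping is routine. If the claim is intended to extend to taxes levied on other sectors (entering through $r_j$), an analogous route through $\pdv{\beta_i}{r_j}$ would be required; the sign analysis there is more delicate since the numerator reduces to $kd\bigl[-2 + (kd)^2 r_i r_j + kd r_i\bigr]$, so the chain would need to be handled separately.
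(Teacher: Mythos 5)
Your argument is correct and considerably cleaner than the paper's for the parameters it actually covers. The compact form $\beta_i = 2k^3d\,r_i^3(1-kdr_j)^2/(1-(kd)^2r_ir_j)$ is right, the sign of $\pdv{\beta_i}{r_i}$ via the bracket $3-2(kd)^2r_ir_j>1$ is right, and chaining through $\pdv{r_i}{m_i}<0$ and $\pdv{r_i}{\tau_{ib}}<0$ correctly delivers $\pdv{\beta_i}{m_i}<0$ and $\pdv{\beta_i}{\tau_{ii}},\pdv{\beta_i}{\tau_{ij}}<0$. The paper instead differentiates $\beta_i$ by brute force and reports explicit expressions ($T^i_{ij}$, $T^j_{ij}$) with asserted signs; your reparametrization makes the role of Assumption \ref{assn:no_natural_orbital_seizure} transparent where the paper's does not.

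The gap is that the lemma asserts $\beta_i$ is decreasing in $\tau_{\cdot\cdot}$ --- \emph{all} taxes --- and the paper's proof explicitly includes $\pdv{\beta_i}{\tau_{ji}}$ and $\pdv{\beta_i}{\tau_{jj}}$, i.e.\ taxes levied on the \emph{other} sector, which enter only through $r_j$. You defer exactly this case, so as written you have proved only part of the statement. Worse, your own closing observation shows the deferred case is not routine bookkeeping: under Assumption \ref{assn:no_natural_orbital_seizure} we have $kdr_i<1$ and $(kd)^2r_ir_j<1$, so your bracket $-2+(kd)^2r_ir_j+kdr_i$ is \emph{strictly negative}, giving $\pdv{\beta_i}{r_j}<0$; combined with $\pdv{r_j}{\tau_{jb}}<0$ the chain rule yields $\pdv{\beta_i}{\tau_{jb}}>0$ --- the opposite sign from the lemma and from the paper's claimed $\pdv{\beta_i}{\tau_{ji}}=-p_iT^j_{ij}<0$. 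A quick numerical check (e.g.\ $k=0.1$, $d=1$, $m_i=m_j=1$, $p_1=p_2=1$, raising $\tau_{ji}$ from $0$ to $0.1$) confirms $\beta_i$ rises, so the tension is real and appears to sit with the paper's cross-tax signs rather than with your calculation. You should either restrict your conclusion to $m_i$ and $\tau_{i\cdot}$ and flag the discrepancy explicitly, or carry the $r_j$ chain through and state the resulting sign; note that the downstream use in Proposition \ref{prop:taxes_support_treaties} only requires $\pdv{\beta_i}{\tau_{ij}}<0$, which your proof does establish.
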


Using Lemma \ref{lemm:beta_i_statics}, we can now establish our key result about treaties and taxes: that national satellite tax policies make an international debris abatement treaty easier to sustain. This result operates through two channels. First, higher national satellite tax rates make it more likely that all nations prefer to avert catastrophe even without a treaty. Second, provided the costs of debris abatement are low enough (or the damages from Kessler Syndrome are high enough), higher national satellite tax rates make it more likely that defection from the treaty is a punishment rather than burden-reduction. \\

\begin{restatable}[Taxes support treaties]{prop}{taxesSupportTreaties}
\label{prop:taxes_support_treaties}
Higher foreign taxes on nation $i$'s satellite sector give nation $i$ a stronger incentive to avert Kessler Syndrome. If the costs of debris abatement are low enough (or the damages from Kessler Syndrome are high enough), then higher foreign taxes on $i$ also give $i$ a stronger incentive to remain within a self-enforcing debris abatement treaty. Formally,
\begin{align}
\label{eqn:tax_makes_nash_aversion_better}
    \pdv{}{\tau_{ij}} \left( \beta_i \frac{\bar{Q}}{N} + \frac{c(\bar{Q}/N)^2}{2} \right) &< 0,
\end{align}
and if
\begin{equation}
    \sqrt{\beta_i^2 + 2cX} - c \beta_i > 0,
\end{equation}
then
\begin{align}
\label{eqn:tax_makes_defection_worse}
    \pdv{}{\tau_{ij}} \left(- \frac{\bar{Q}}{N} - \frac{1}{c} \left( \beta_i - \sqrt{\beta_i^2 + 2cX} \right) \right) &> 0.
\end{align}
\end{restatable}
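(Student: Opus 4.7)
The plan is to prove both inequalities by direct differentiation, relying on two results already established in the paper: Lemma \ref{lemm:beta_i_statics}, which gives $\partial \beta_i / \partial \tau_{ij} < 0$, and Corollary \ref{coro:taxes_reduce_required_abatement}, which gives $\partial \bar{Q} / \partial \tau_{ij} < 0$. Once these two sign facts are in hand, each inequality reduces to a short sign check on a sum of products.

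For inequality \ref{eqn:tax_makes_nash_aversion_better}, I would apply the product and power rules to obtain
\begin{equation}
\pdv{}{\tau_{ij}}\left(\beta_i \frac{\bar{Q}}{N} + \frac{c(\bar{Q}/N)^2}{2}\right) = \frac{\bar{Q}}{N}\pdv{\beta_i}{\tau_{ij}} + \frac{1}{N}\left(\beta_i + \frac{c\bar{Q}}{N}\right)\pdv{\bar{Q}}{\tau_{ij}}.
\end{equation}
Every factor that multiplies the two partial derivatives is strictly positive (the abatement level, the sector count, the cost parameter, and $\beta_i$ from its definition under Assumption \ref{assn:no_natural_orbital_seizure}), while both partial derivatives are strictly negative. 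Both terms on the right-hand side are therefore negative, which is exactly the desired inequality.

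For inequality \ref{eqn:tax_makes_defection_worse}, I would introduce $h(\beta_i) \equiv \beta_i - \sqrt{\beta_i^2 + 2cX}$ and chain through the square root. Because $\beta_i < \sqrt{\beta_i^2 + 2cX}$ whenever $X > 0$,
\begin{equation}
\pdv{h}{\beta_i} = 1 - \frac{\beta_i}{\sqrt{\beta_i^2 + 2cX}} > 0,
\end{equation}
so $\partial h / \partial \tau_{ij} = (\partial h / \partial \beta_i)(\partial \beta_i / \partial \tau_{ij}) < 0$ by Lemma \ref{lemm:beta_i_statics}. Then
\begin{equation}
\pdv{}{\tau_{ij}}\left(-\frac{\bar{Q}}{N} - \frac{h(\beta_i)}{c}\right) = -\frac{1}{N}\pdv{\bar{Q}}{\tau_{ij}} - \frac{1}{c}\pdv{h}{\tau_{ij}}
\end{equation}
is a sum of two strictly positive quantities, giving the claim. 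The stated side condition $\sqrt{\beta_i^2 + 2cX} - c\beta_i > 0$ does not appear to enter the sign of the derivative itself; I would interpret it as a standing hypothesis ensuring that the post-defection punishment in equation \ref{eqn:treaty_response_abatement} is a meaningful positive quantity in the parameter region where the comparative static is being evaluated, consistent with the discussion surrounding Proposition \ref{prop:self_enforcing_treaty}.

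The main obstacle is organizational rather than analytical: the heavy lifting is entirely absorbed into Lemma \ref{lemm:beta_i_statics} and Corollary \ref{coro:taxes_reduce_required_abatement}, so the principal risk is bookkeeping, in particular correctly threading the square-root chain rule and being careful that the dampening factor $\beta_i / \sqrt{\beta_i^2 + 2cX}$ lies strictly in $(0,1)$ so that $\partial h / \partial \beta_i$ is strictly, not just weakly, positive.
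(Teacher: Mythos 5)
Your proof is correct and follows the same basic route as the paper's: differentiate directly and feed in the two sign facts, $\pdv{\beta_i}{\tau_{ij}}<0$ from Lemma \ref{lemm:beta_i_statics} and $\pdv{\bar{Q}}{\tau_{ij}}<0$ from Corollary \ref{coro:taxes_reduce_required_abatement}. The first inequality is handled identically (your grouping $\frac{1}{N}\bigl(\beta_i + \frac{c\bar{Q}}{N}\bigr)\pdv{\bar{Q}}{\tau_{ij}}$ expands to exactly the paper's three terms). On the second inequality your computation is cleaner and in one respect more careful than the paper's. The appendix reports the derivative of $-\frac{1}{c}\bigl(\beta_i - \sqrt{\beta_i^2+2cX}\bigr)$ as $-\pdv{\beta_i}{\tau_{ij}}\bigl(\frac{1}{c} - \frac{\beta_i}{\sqrt{\beta_i^2+2cX}}\bigr)$, a bracket whose sign is genuinely ambiguous, and that is precisely where the hypothesis $\sqrt{\beta_i^2+2cX}-c\beta_i>0$ gets invoked. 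Carrying the chain rule through as you do, the bracket is instead $\frac{1}{c}\bigl(1-\frac{\beta_i}{\sqrt{\beta_i^2+2cX}}\bigr)$, which is strictly positive whenever $c,X>0$ because $0<\beta_i<\sqrt{\beta_i^2+2cX}$; so your remark that the side condition does not enter the sign of the derivative is right, and your argument proves the conclusion unconditionally, which a fortiori proves the proposition as stated. (The paper also writes the other term of that derivative as $+\frac{1}{N}\pdv{\bar{Q}}{\tau_{ij}}$ rather than $-\frac{1}{N}\pdv{\bar{Q}}{\tau_{ij}}$; your sign is the one that makes the term positive as required.) No gaps.
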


A similar conclusion follows for improvements in sector efficiencies.
\section{Discussion}
\label{sec:discussion}

Many scholars analyzing orbital-use management have established pessimistic or concerning results. These include the high likelihood that physical tipping points have been or are soon to be crossed \citep{krisko2001critical, lioujohnson2008_soi, kessler2010kessler, drmola2018kessler}, the economic incentives provided by current management institutions to overuse orbital space and induce Kessler Syndrome \citep{adilov2015economic, rouillon2020physico, rao2020orbital, rao2022cost}, and the lack of incentives (both due to legal and policy barriers as well as the natural economic structure of the problem) to abate or remove orbital debris without free-riding \citep{weeden2011overview, adilov2015economic, rao2018economic, klima2018space}. \\

This paper offers more optimistic results. Proposition \ref{prop:eqm_tax_distn} shows that a equilibrium can exist where nations regulate their own and each others' satellites without transfers of surplus. Proposition \ref{prop:self_enforcing_treaty} provides guidance on how to draft a debris abatement treaty which can avert Kessler Syndrome without requiring supranational enforcement beyond incentive-compatible responses specified in the treaty itself. Proposition \ref{prop:taxes_support_treaties} shows how bilateral national satellite taxes can increase the likelihood of treaty success. To the extent that debris abatement treaties can be linked to other environmental issues (e.g. climate treaties) during negotiations, there is even more scope for treaty-based debris abatement \citep{adilov2022understanding}. \\

We decompose the national incentives to use ``bilateral satellite taxes'' (i.e. satellite regulatory policies imposed as conditions for market access) into three channels. The first two channels, environmental cleanup and fleet expansion, create an incentive for nations to levy satellite taxes on other nations' satellites. The third channel, fleet reduction, limits that incentive from inducing an international ``tax war''. These channels suggest that even in the absence of catastrophe risk, nations will prefer to levy some rather than no satellite taxes on each others' fleets so as to control collision risk and increase the long-run value of satellite services received.  This echoes an important result in \cite{brander1998open}, where trade in a renewable resource subject to open access can reduce welfare by hastening stock depletion and tariffs or trade restrictions can improve welfare by mitigating the negative externality. In our model satellite services are a tradeable good produced using orbital space, a renewable resource. \\

The magnitude of global welfare gains from a globally-harmonized optimal orbital-use fee identified in \cite{rao2020orbital} suggests the existence of a global regulatory equilibrium with positive tax rates \emph{and} transfers between nations. Proposition \ref{prop:eqm_tax_distn} goes one step further, showing that a global regulatory equilibrium with positive satellite taxes can exist \emph{without} transfers and identifying a relevant sufficient condition. The sufficient condition requires debris abatement to be decreasing in tax rates, which can be assured by a debris abatement treaty subject to open-access responses. Together, Propositions \ref{prop:eqm_tax_distn} and \ref{prop:taxes_support_treaties} show that a system of non-harmonized national satellite taxes and an international treaty to abate just enough debris to avert Kessler Syndrome can reinforce each other, increasing the global benefits from orbit use without undermining the Outer Space Treaty.\\

We focus on the positive question of what equilibrium tax rates will look like when nations act strategically to maximize the economic benefits they receive from orbit use rather than normative questions of optimal fee system design. The literature on environmental federalism offers some insights regarding the efficiency properties of the system we identify. Orbital space is a fully spatially-connected resource, with different national markets for satellite services inducing preference heterogeneity. Open access and orbital debris create an interjurisdictional resource externality. Decentralized national policies (and the equilibrium identified in Proposition \ref{prop:eqm_tax_distn}) are therefore likely inefficient, albeit efficiency-enhancing, unless the marginal dynamic costs of debris are sufficiently high to make it privately optimal to totally eradicate debris and collision risk \citep{levinson2003environmental, costello2015partial, costello2017private}. This condition seems both unlikely to be satisfied---the external cost decomposition in \cite{rao2022cost} shows debris reductions will reduce the marginal dynamic cost of debris (see the third term of equation (30) there)---and technologically infeasible. The potential for catastrophe and an international treaty to avert it create an interjurisdictional fiscal externality \citep{wilson1999theories}. Proposition \ref{prop:taxes_support_treaties} can therefore be interpreted as a ``second-best'' result \citep{lipsey1956general}, wherein the fiscal and resource externalities mitigate each other. \\

While our analysis focuses on long-run possibilities and outcomes, our analysis offers suggestions on how to begin the process of implementing the interlocking system of orbital-use management we describe here. The key is the legacy debris stock. \cite{lemoine2020incentivizing} finds that large legacy carbon stocks make negative emissions policies more challenging to implement, a finding reaffirmed in our analysis of a self-enforcing treaty---notice the ways the required abatement level $\bar{Q}$ (which is increasing in the legacy stock $D_0$) appears in Propositions \ref{prop:eqm_debris_abatement} and \ref{prop:self_enforcing_treaty}. However, Proposition \ref{prop:eqm_tax_distn} shows that larger legacy debris stocks make a global regulatory equilibrium \emph{more} likely to emerge (i.e. larger $D_0$ makes Assumption \ref{assn:tax_improves_welfare} more likely to hold for all $i$). Further, Proposition \ref{prop:taxes_support_treaties} shows that bilateral taxes in the global regulatory equilibrium will support a self-enforcing debris abatement treaty. Thus, our analysis suggests that when the legacy debris stock is large, beginning with national regulatory policies may be a path toward implementing an international debris abatement treaty. \\

Technological advances which improve access to orbit and make it cheaper to abate debris also make it easier to sustain cooperation in preventing Kessler Syndrome. Policies which support these technologies seem like a useful path forward. Understanding how different types of satellite operators respond to different regulatory policies is also important in determining the shape of a potential global regulatory equilibrium. Empirical and experimental work in this direction seems like another useful direction to pursue.  \\

There remain many limitations to our analysis. Importantly, we do not consider national security and dual-use concerns. These are key drivers of orbit use. To the extent that nations prefer to weaken each others' satellite sectors---or satellite sectors serving each others' markets---our results overstate the potential for international cooperation and the benefits of international competition in orbit use. But to the extent that nations with high dependence on satellites for national security internalize the benefits of keeping the orbital environment clean, our results understate the potential for cooperation. Similar caveats apply in the political domain when applying our results to atmospheric carbon management. Further, anti-satellite missile testing may produce enough difficult-to-remove fragments that it is no longer possible to abate enough debris to prevent Kessler Syndrome. These limitations suggest several areas for future research. One avenue is to identify potential non-kinetic attacks on national satellite fleets which exploit open-access responses or national satellite tax policies so that these attack surfaces can be mitigated when designing the policy regime. Another is to explore the scope for decentralized or treaty-based deterrence or prevention of missile testing, while a third is to study mechanisms to overcome strategic non-economic orbital-use incentives. \\

We also ignore how the costs of debris abatement are financed. It is possible that such abatement can be financed by satellite taxes from the global regulatory equilibrium. This is an interesting avenue for future research. Yet we argue that focusing on budget-balanced tax-and-spend approaches to orbital-use management risks misunderstanding the nature of the problem. Open access to orbit is an extremely costly status quo, reducing the global value generated by satellites over the coming decades by an order of magnitude \citep{rao2020orbital}. This is not surprising---similar results have been observed in other open access resources, such as oil fields and fisheries \citep{libecap1984contractual, world2009sunken}. As debris accumulates, the damages incurred from open access will grow and eventually become costlier to address \citep{mcknight2010pay}. Allowing open-access orbit use to continue without economic policy responses is thus akin to allowing a depressed economy to be scarred by hysteresis. Fiscal policy to avert hysteresis can be self-financing through the additional growth it produces relative to the status quo \citep{delong2012fiscal}. Like fiscal policy in a depressed economy, satellite taxes and debris abatement pay for themselves. \\

Finally, our analysis also ignores uncertainty over the location of debris threshold for Kessler Syndrome. \cite{barrett2013climate} finds that such uncertainty can greatly limit the potential for self-enforcing international treaties to avert catastrophe. However, engineering analyses are continually refining estimates of the nature of the self-sustaining debris growth threshold, e.g. \cite{krisko2001critical, lioujohnson2008_soi, drmola2018kessler, lewis2020understanding}. In general, scientific understanding of the fundamental mechanisms and interactions driving orbital debris growth seems better than scientific understanding of the mechanisms and interactions driving planetary warming. Further, regardless of the location of the threshold, national regulatory policies can be used to support cooperation in debris abatement, and the incentives we identify to implement national regulatory policies are independent of the existence of a catastrophe threshold. Consequently, threshold uncertainty may not be as serious a threat to international cooperation in orbital-use management as in atmospheric carbon management.

\section{Conclusion}
\label{sec:conclusion}
Managing orbital space is critical to ensuring humanity continues to benefit from satellite services. While international pollution abatement agreements may be a preferred approach to managing a global commons, orbital-use management agreements may be more difficult to construct in the near future than national regulatory policies. In this paper we contribute to the literature and policy discussion around orbital-use management by examining the relationships between open access to orbit, national regulatory policies, and international treaties to prevent self-sustaining debris growth (``Kessler Syndrome''). We highlight three messages regarding international competition and cooperation in orbital-use management. \\

First, like a mobile capital stock subject to costly regulation in one jurisdiction, open access to orbit means that regulatory policies in one nation will shrink their satellite sector and induce expansion of other nations' satellite sectors. Despite this response pattern, bilateral regulatory policies will generally produce sustained improvements to the orbital environment and the international economic value derived from satellites. Second, a large legacy debris stock can induce all nations receiving satellite services to impose environmental regulations on all satellite fleets providing services as a condition for market access. Such a global regulatory equilibrium is compatible with incentives to maximize national benefits from satellite services. Third, a system of bilateral regulatory policies as conditions for market access can be used to support the formation and maintenance of a self-enforcing international treaty to abate debris production and avert Kessler Syndrome. Such a system can be incentive-compatible for all nations involved, avoiding the need for a supranational enforcer. \\
    

There are many remaining open questions of orbital-use policy design, particularly relating to national security and dual-use technology concerns. Still, it is worth knowing that there is a light at the end of the tunnel. If economic benefits are a large enough share of the value nations derive from orbit use, international competition for orbital space and international cooperation to manage the orbital environment can be mutually reinforcing.


\newpage

\bibliographystyle{aea}
\bibliography{bibliography}

\newpage

\appendix

\section{Appendix}

\oaEqmExist*
\begin{proof}
Define 
\begin{equation*}
    A_i = \frac{\sum_j p_j (1 - \tau_{ij})(1 + k(Q- D_0))}{kd \sum_j p_j (1 - \tau_{ij}) + m_i}
\end{equation*}
and 
\begin{equation*}
    B_{i} = - \frac{k d \sum_j p_j (1 - \tau_{ij})}{kd \sum_j p_j (1 - \tau_{ij}) + m_i},
\end{equation*} 
where $A_i$ and $B_i$ reflect the long-run benefit-cost ratio and the long-run ratio of environmental to total costs for sector $i$. Equation \ref{eqn:sat_bestresponse} can then be written as 
\begin{equation}
    S_i^\star(S_{-i}, \tau_{i \cdot}, Q) = A_i + B_i S_{-i}.
\end{equation}

Collecting best-response functions in this form yields the linear system
\begin{align}
    S^\star_1 &= A_1 + B_{1}\sum_{j \neq 1} S^\star_j \\
    S^\star_2 &= A_2 + B_{2}\sum_{j \neq 2} S^\star_j \\
    & \vdots \nonumber \\
    S^\star_{n_s} &= A_{n_s} + B_{n_s}\sum_{j \neq n_s} S^\star_j .
\end{align}

Letting $A$ be the $n_s \times 1$ vector of the $A_i$,
\begin{equation}
    A = [A_1, \dots, A_{n_s}],
\end{equation}
$B$ be the $n_s \times n_s$ matrix with $0$ on the main diagonal and row $i$ containing copies of $B_i$, 
\begin{align}
B = 
    \begin{bmatrix}
    0       & B_1   & \dots  & B_1 \\
    B_2     & 0     &        & \vdots \\
    \vdots  &       & \ddots & B_{n_s - 1}\\
    B_{n_s} & \dots & B_{n_s}& 0
    \end{bmatrix}
\end{align}
and $S$ be the $n_s \times 1$ vector of equilibrium satellite levels $S^\star_i$, the linear system above can be rewritten as
\begin{equation}
    S = A + BS.
\end{equation}

If $det(I - B) \neq 0$, then the global open-access equilibrium is
\begin{equation*}
    S = (I - B)^{-1}A.
\end{equation*}
\end{proof}

\oaReduction*
\begin{proof}
Let $I$ be the $n_s \times n_s$ identity matrix, $B$ and $A$ as defined in Lemma \ref{lemm:open_access_eqm} for an $n_s$-player global open-access equilibrium, and $I^2, B^2, A^2$ their analogs for a $2$-player global open-access equilibrium such that $\sum_i A_i = \sum_i A^2_i$ and $\sum_i B_i = \sum_i B^2_i$. Define $G = (I - B)^{-1}$ and $H = (I^2 - B^2)^{-1}$. \\

$G$ can be partitioned into 4 submatrices,
\begin{align}
G = \begin{bmatrix}
G_{11} & G_{12} \nonumber \\
G_{21} & G_{22} \nonumber
\end{bmatrix},
\end{align}
where $G_{11}$ has dimension $1 \times 1$, $G_{12} = G_{21}^T$ by the  has dimension $1 \times (n_s-1)$, and $G_{22}$ has dimension $(n_s-1) \times (n_s-1)$. Similarly, $A$ can be partitioned into 2 vectors,

\begin{align}
A = \begin{bmatrix}
A_{1}\nonumber \\
A_{\setminus 1}\nonumber
\end{bmatrix},
\end{align}

of dimension $1 \times 1$ and $(n_s-1)\times1$. Like $G$ and $A$, $H$ and $A^2$ can be written as 
\begin{align}
H = \begin{bmatrix}
H_{11} & H_{12} \nonumber \\
H_{21} & H_{22} \nonumber
\end{bmatrix}
\end{align}
and
\begin{align}
A^2 = \begin{bmatrix}
A^2_{1}\nonumber \\
A^2_{\setminus 1}\nonumber
\end{bmatrix},
\end{align}

where all elements of both objects are of dimension $1 \times 1$. From the definitions of $A^2$ and $B^2$, it can be seen that
\begin{align}
    G_{11}A_1 + G_{12}A_{-1} &= H_{11}A^2_1 + H_{12}A^2_{-1} \\ 
    \implies S_i^\star &= S_i^{\star,2}.
\end{align}

Similarly, from the definitions of $A^2$ and $B^2$, it can be seen that
\begin{align}
    \sum_{-i} \left( G_{21}A_1 + G_{22}A_{\setminus 1} \right) &= H_{21}A^2_1 + H_{22}A^2_{\setminus 1} \\
    \implies S_{-i}^\star &= S_j^{\star,2}.
\end{align}
where the sum in $\sum_{-i}$ is over the $n_s-1$ elements of the $(n_s-1) \times 1$ vector $G_{21}A_1 + G_{22}A_{-1}$.

\end{proof}

\oaDecomposition*
\begin{proof}
From Lemma \ref{lemm:open_access_eqm} we have that $S = (I - B)^{-1} A$ is the global open-access equilibrium satellite distribution. Define $\phi$ to be the $n_s \times 1$ vector where each element is $1 + kQ$, and $r$ to be the $n_s \times 1$ vector with $i^{\text{th}}$ element
\begin{equation}
    r_i = \frac{\sum_j p_j (1 - \tau_{ij})}{kd \sum_j p_j (1 - \tau_{ij}) + m_i}.
\end{equation}

Next, note that
\begin{align}
    A = \phi \odot r. 
\end{align}

Define
\begin{equation}
    \sigma = (I - B)^{-1} \phi,
\end{equation}

\noindent which is a $n_s \times 1$ vector. Then we have 
\begin{align}
    S &= (I - B)^{-1} A \\
    &= \sigma \odot r,
\end{align}
where only $\sigma$ depends on $Q$.
\end{proof}

\taxesRents*
\begin{proof}
    Differentiating equation \ref{eqn:two_nation_eqm} yields
    \begin{equation}
        \pdv{S_i^\star}{\tau_{ij}} = \pdv{\sigma_i}{r_i}\pdv{r_i}{\tau_{ij}} r_i + \sigma_i \pdv{r_i}{\tau_{ij}}.
    \end{equation}
    Since we have assumed $r_i < (kd)^{-1} ~~ \forall i$, we have $\sigma_i, r_i > 0 ~~ \forall i$. Differentiating $r_i$ and $\sigma_i$ under these assumptions yields
    \begin{align}
        \pdv{\sigma_i}{r_i} &= \frac{(kd)^2 (1 + k(Q-D_0)) (1 - kd r_j) r_j}{(1 - (kd)^2 r_i r_j )^2} > 0, \\
        \pdv{r_i}{\tau_{ij}} &= - \frac{m_i p_j}{(\sum_j p_j (1 - \tau_{ij}) + m_i)^2 } < 0.
    \end{align}
    This establishes that $\forall i,j, ~~ \pdv{S_i^\star}{\tau_{ij}} < 0$. \\
    
    Again differentiating equation \ref{eqn:two_nation_eqm} we obtain
    \begin{equation}
        \pdv{S_j^\star}{\tau_{ij}} = \pdv{\sigma_j}{r_i}\pdv{r_i}{\tau_{ij}} r_j.
    \end{equation}
    Differentiating $\sigma_i$, 
    \begin{align}
        \pdv{\sigma_j}{r_i} &= - \frac{kd (1 + k(Q-D_0)) (1 - kd r_j)}{(1 - (kd)^2 r_i r_j )^2} < 0, 
    \end{align}
    establishing that $\forall i \neq j, ~~ \pdv{S_j^\star}{\tau_{ij}} > 0$. \\
    
    Finally, comparing $-\pdv{S_i^\star}{\tau_{ij}}$ and $\pdv{S_j^\star}{\tau_{ij}}$, we obtain
    \begin{align}
        -\pdv{S_i^\star}{\tau_{ij}} &> \pdv{S_j^\star}{\tau_{ij}}\\
        \implies -\pdv{\sigma_i}{r_i}\pdv{r_i}{\tau_{ij}} r_i - \sigma_i \pdv{r_i}{\tau_{ij}} &>  \pdv{\sigma_j}{r_i}\pdv{r_i}{\tau_{ij}} r_j \\
        \implies -\pdv{r_i}{\tau_{ij}} \left(\pdv{\sigma_i}{r_i}r_i + \pdv{\sigma_j}{r_i}r_j  + \sigma_i \right) &> 0,
    \end{align}
    which holds as long as $\left(\pdv{\sigma_i}{r_i}r_i + \pdv{\sigma_j}{r_i}r_j  + \sigma_i \right)>0$. Simplifying reveals that
    \begin{align}
        \left(\pdv{\sigma_i}{r_i}r_i + \pdv{\sigma_j}{r_i}r_j  + \sigma_i \right) > 0 \iff (1 + k(Q-D_0))[1 - kdr_j(1 - kdr_i(1 -kdr_i))] > 0,
    \end{align}
    which is ensured by $r_i < (kd)^{-1} ~~ \forall i$.
\end{proof}

\abatementGrowth*
\begin{proof}
Differentiating equation \ref{eqn:open_access} yields
\begin{equation}
    \pdv{S_i^\star}{Q} = \frac{k(1 - kd r_j)}{1 - (kd)^2 r_i r_j} > 0,
\end{equation}
establishing that $\forall i, ~~ \pdv{S_i^\star}{Q} > 0$. Differentiating again yields
\begin{equation}
    \frac{\partial^2 S_i^\star}{\partial Q \partial \tau_{ij}} = -\frac{d^2k^3 m_i m_j p_j \sum_i p_i (1 - \tau_{ji}) }{(kdm_j \sum_j p_j (1 - \tau_{ij}) + m_i (kd \sum_i p_i (1 - \tau_{ji}) + m_j))^2 } < 0,
\end{equation}
establishing that $\forall i,j, ~~ \frac{\partial^2 S_i^\star}{\partial Q \partial \tau_{ij}} < 0$.
\end{proof}

\treatyEffective*
\begin{proof}
Differentiating $D^*$ with respect to $Q$, we obtain
\begin{align}
    \pdv{D^*}{Q} &= d \left[ \pdv{S^*_i}{Q} + \pdv{S^*_j}{Q} \right] - 1 \\
    &= d \left[ \frac{k(1 - kdr_j) + k(1 - kdr_i)}{1 - (kd)^2 r_i r_j} \right] - 1 \\
    &= -kd \frac{(1-2kd) + (kd)^2 (r_i + r_j - r_ir_j)}{1 - (kd)^2 r_i r_j}.
\end{align}
First, $m_i > 0 ~~ \forall i \implies r_i < 1 ~~ \forall i \implies r_i + r_j - r_ir_j > 0$. Next, $r_i < (kd)^{-1} ~~ \forall i \implies 1 - (kd)^2 r_i r_j > 0$. Finally, $kd < \frac{1}{2} \implies 1-2kd > 0$. These three conditions are sufficient to ensure $\pdv{D^*}{Q} < 0$.
\end{proof}

\eqmTaxDistn*
\begin{proof}

First, note that the system of bilateral tax best responses defined by equation \ref{eqn:tax_foc} is a bijective mapping from a compact set to itself, i.e.
\begin{equation}
\label{eqn:tax_BR_mapping}
\tau^\star_{\cdot \cdot}(\tau_{\cdot \cdot}) : \mathbb{R}^{n_s \times n_s}_{[0,1]} \to \mathbb{R}^{n_s \times n_s}_{[0,1]}.
\end{equation}
By the Brouwer Fixed Point Theorem, an equilibrium tax distribution exists. It remains to be shown that an equilibrium exists with positive tax rates. \\
Select an arbitrary nation $j$. For conciseness, define 
\begin{equation}
    V_j = p_j \sum_{i=1}^{n_s} (1 - \tau_{ij})  S_i^\star,
\end{equation}
so that $j$'s benefits function can be written as $W_j = (1-kD) V_j$. Next, note that
\begin{equation}
    \pdv{W_j}{\tau_{ij}} = -k \pdv{D}{\tau_{ij}} V_j + (1-kD)\pdv{V_j}{\tau_{ij}},
\end{equation}
where
\begin{align}
    \pdv{D}{\tau_{ij}}\bigg|_{\tau_{ij}=\tau_{jj}=0} &= d \left( \pdv{S^\star_i}{\tau_{ij}} + \pdv{S^\star_j}{\tau_{ij}}
    \right)\bigg|_{\tau_{ij}=\tau_{jj}=0} < 0, \\
    \pdv{V_j}{\tau_{ij}}\bigg|_{\tau_{ij}=\tau_{jj}=0} &= p_j \left( -S^\star_i +  \left(\pdv{S^\star_i}{\tau_{ij}} + \pdv{S^\star_j}{\tau_{ij}}\right) \right)\bigg|_{\tau_{ij}=\tau_{jj}=0} < 0
\end{align}
both inequalities following from Proposition \ref{prop:taxes_rents}. Rearranging and applying Assumption \ref{assn:tax_improves_welfare}, we obtain 
\begin{equation}
\label{eqn:W_j_tau_ij_positive}
    \pdv{W_j}{\tau_{ij}}\bigg|_{\tau_{ij}=\tau_{jj}=0} > 0.
\end{equation}

Similarly, we compute 
\begin{equation}
    \pdv{W_j}{\tau_{jj}} = -k \pdv{D}{\tau_{jj}} V_j + (1-kD)\pdv{V_j}{\tau_{jj}},
\end{equation}
obtaining
\begin{align}
    \pdv{D}{\tau_{jj}}\bigg|_{\tau_{ij}=\tau_{jj}=0} &= d \left( \pdv{S^\star_i}{\tau_{jj}} + \pdv{S^\star_j}{\tau_{jj}}
    \right)\bigg|_{\tau_{ij}=\tau_{jj}=0} < 0, \\
    \pdv{V_j}{\tau_{ij}}\bigg|_{\tau_{ij}=\tau_{jj}=0} &= p_j \left( -S^\star_i +  \left(\pdv{S^\star_i}{\tau_{ij}} + \pdv{S^\star_j}{\tau_{ij}}\right) \right)\bigg|_{\tau_{ij}=\tau_{jj}=0} < 0
\end{align}

Analysis reveals both inequalities hold whenever $\frac{m_j}{kd \sum_j p_j(1 - \tau_{ij}) } > 0$, which is always the case. Again applying Assumption \ref{assn:tax_improves_welfare}, we obtain 
\begin{equation}
\label{eqn:W_j_tau_jj_positive}
    \pdv{W_j}{\tau_{jj}}\bigg|_{\tau_{ij}=\tau_{jj}=0} > 0.
\end{equation}

Conditions \ref{eqn:W_j_tau_ij_positive} and \ref{eqn:W_j_tau_jj_positive} show that any nation $j$ will prefer to levy a positive tax rate on any nation $i$, including itself. Since existence of an equilibrium is guaranteed by the Brouwer Fixed Point Theorem and we have shown that all nations will prefer positive tax rates to zero tax rates (regardless of what tax rates others levy), we have established existence of a global regulatory equilibrium with positive tax rates.
\end{proof}

\nashAbatement*
\begin{proof}
There are two cases: either it is not privately optimal to avert catastrophe, or it is privately optimal to avert catastrophe. \\

\textbf{If is not privately optimal to avert catastrophe:} Nations choosing abatement levels non-cooperatively will maximize equation \ref{eqn:abatement_payoff_component} taking all other nations' abatement levels as given. This gives the following FOC:
\begin{align}
    q_i : \frac{\partial \pi_i}{\partial q_i} &= - \beta_i - c q_i = 0 \\
    \implies q_i &= -\frac{\beta_i}{c}.
\end{align}

Since $\beta_i > 0$ by Assumption \ref{assn:no_natural_orbital_seizure}, $-\frac{\beta_i}{c} < 0$. But since $q_i$ is bounded below at $0$, the privately optimal abatement level is $q_i=0$. Since this holds for all $i$, we obtain $Q = 0$ and catastrophe is not averted. \\

\textbf{If it is privately optimal to avert catastrophe:} Since the payoff function is quadratic except for a discontinuity at $Q = \bar{Q}$, the maximum is attained either when $q_i = 0$ or when $q_i : Q = \bar{Q}$. \\

Suppose all $N-1$ nations $j \neq i$ play $q_j = \frac{\bar{Q}}{N}$. Then $i$ is pivotal in averting catastrophe. If $i$ chooses not to play $q_i = \frac{\bar{Q}}{N}$, then following the argument in the previous case they will free-ride and play $q_i = 0$, giving the following payoff function:
\begin{align}
    \pi_i = 
    \left\{
    	\begin{array}{ll}
    		b_i(\bar{Q}) - \frac{c(\bar{Q}/N)^2}{2}  & \mbox{if } Q = \bar{Q} \\
    		b_i \left(\frac{\bar{Q}}{N}(N-1) \right) - X & \mbox{if } Q < \bar{Q}
    	\end{array}
    \right. 
\end{align}

Nation $i$ will not free-ride and instead play $q_i = \frac{\bar{Q}}{N}$ if and only if
\begin{align}
    b_i(\bar{Q}) - \frac{c(\bar{Q}/N)^2}{2} &\geq b_i \left(\frac{\bar{Q}}{N}(N-1) \right) - X \\
    \implies X &\geq b_i \left(\frac{\bar{Q}}{N}(N-1) \right) - b_i(\bar{Q}) + \frac{c(\bar{Q}/N)^2}{2} \\
    \implies X &\geq \beta_i \frac{\bar{Q}}{N} + \frac{c(\bar{Q}/N)^2}{2}.
    \tag{\ref{eqn:no_treaty_defection}}
\end{align}

Equation \ref{eqn:no_treaty_defection} must hold for all $i$ to sustain the catastrophe-averting Nash equilibrium without a treaty.
\end{proof}

\selfEnforcingTreaty*
\begin{proof}
Starting from full participation in the treaty, suppose nation $i$ considers withdrawal. If they do so, the treaty instructs the remaining $N-1$ members to play $Q_{\setminus i}$ (which may be different from $\frac{\bar{Q}}{N}(N-1)$). There are two cases from here:
\begin{enumerate}
    \item either $i$ plays $q_i = \bar{Q} - Q_{\setminus i}$ and catastrophe is averted despite their defection, or
    \item $i$ plays $q_i = 0$ and allows catastrophe to occur.
\end{enumerate}
Suppose $i$ considers withdrawal and supplying $q_i=0$ units of abatement instead of the treaty-specified $\frac{\bar{Q}}{N}$, and the remaining members play $Q_{\setminus i}$. Then $i$ will prefer to avert catastrophe by supplying $q_i = \bar{Q} - Q_{\setminus i}$ units of abatement instead of $q_i = 0$ if and only if
\begin{equation}
        \underbrace{b_i(\bar{Q}) - \frac{c}{2}\left(\bar{Q}-Q_{\setminus i}\right)^2}_{\text{\shortstack{Payoff from averting catastrophe\\while outside the treaty}}}   \geq \underbrace{b_i\left(Q_{\setminus i}\right) - X}_{\text{\shortstack{Payoff from Nash defection\\and incurring catastrophe}}}\label{eqn:leave_treaty_avert_catastrophe}
\end{equation}

Since $Q_{\setminus i}$ is a choice for the remaining treaty signatories, condition \ref{eqn:leave_treaty_avert_catastrophe} shows how the treaty can be written to naturally induce any defector to avert catastrophe despite leaving the treaty. Rearranging condition \ref{eqn:leave_treaty_avert_catastrophe}, solving the quadratic inequality, and selecting the smallest root, we obtain
\begin{equation}
\tag{\ref{eqn:treaty_response_abatement}}
    Q_{\setminus i} = \bar{Q} + \frac{1}{c} \left( \beta_i - \sqrt{\beta_i^2 + 2cX} \right).
\end{equation}
Since $\beta_i, c, X > 0$, we have $\left( \beta_i - \sqrt{\beta_i^2 + 2cX} \right) < 0$, and $Q_{\setminus i} < \bar{Q}$. \\

Comparing the treaty response abatement quantity $Q_{\setminus i}$ to the treaty abatement quantity $\frac{\bar{Q}}{N}$ shows that if
\begin{equation}
\tag{\ref{eqn:defector_gets_punished}}
    \frac{\bar{Q}}{N} < - \frac{1}{c} \left( \beta_i - \sqrt{\beta_i^2 + 2cX} \right),
\end{equation}
then the treaty response $Q_{\setminus i}$ coordinates the remaining signatories to ``punish'' defection by nation $i$ by inducing them to supply abatement greater than $\frac{\bar{Q}}{N}$. Else, the treaty response coordinates the remaining signatories to avert catastrophe by reducing $i$'s burden of cooperation. Thus, if condition \ref{eqn:defector_gets_punished} holds, defections by nation $i$ are costlier than remaining in the treaty and they will prefer to remain in the treaty.
\end{proof}

\betaiCompStatics*
\begin{proof}
For conciseness, define
\begin{align}
    A_{i} &\equiv \sum_j p_j (1-\tau_{ij}), \\
    B_{i} &\equiv kd A_i + m_i, \\
    T^i_{ij} &\equiv \frac{2k^3d m_i m_j^2 A_i^2 \left( 3m_i A_j + kd A_i \left( 3m_j + kd A_j \right) \right) }{B_i^3 B_j \left(kd m_jA_i + m_i B_j \right)^2 } \\
    T^j_{ij} &\equiv \frac{2k^4d^2m_j^2 A_i^3 \left( kd m_j A_i + 2m_i B_j \right)}{\left( A_i A_j \left( kd m_j A_i + m_i B_j \right) \right)^2 }
\end{align}
Next, computing derivatives, we obtain:
\begin{align}
    \pdv{\beta_i}{\tau_{ii}} &= - p_i T^i_{ij} < 0 \\
    \pdv{\beta_i}{\tau_{ij}} &= - p_j T^i_{ij} < 0 \\
    \pdv{\beta_i}{\tau_{ji}} &= - p_i T^j_{ij} < 0 \\
    \pdv{\beta_i}{\tau_{jj}} &= - p_j T^j_{ij} < 0 \\
    \pdv{\beta_i}{m_i} &= - \frac{2k^3dm_j^2 A_i^3 \left( 3m_i B_j + kd A_i \left( 3m_j + kd A_j \right) \right)}{ B_i^3 B_j \left( kd m_j A_i + m_i B_j \right)^2 } < 0.
\end{align}
\end{proof}

\taxesSupportTreaties*
\begin{proof}
We first establish condition \ref{eqn:tax_makes_nash_aversion_better}, which contains the right-hand side of condition \ref{eqn:no_treaty_defection}:
\begin{align}
    \pdv{}{\tau_{ij}} \left( \beta_i \frac{\bar{Q}}{N} + \frac{c(\bar{Q}/N)^2}{2} \right) &= \pdv{\beta_i}{\tau_{ij}} \frac{\bar{Q}}{N} + \frac{\beta_i}{N} \pdv{\bar{Q}}{\tau_{ij}} + \frac{c \bar{Q}}{N^2} \pdv{\bar{Q}}{\tau_{ij}} \\
    &= \pdv{\beta_i}{\tau_{ij}} \frac{\bar{Q}}{N} + \frac{\beta_i}{N} d \pdv{S^\star}{\tau_{ij}} + \frac{c \bar{Q}}{N^2} \pdv{S^\star}{\tau_{ij}} < 0.
\end{align}
The inequality follows from Proposition \ref{prop:taxes_rents}, which establishes that $\pdv{S^\star}{\tau_{ij}}$, and Lemma \ref{lemm:beta_i_statics}, which establishes that $\pdv{\beta_i}{\tau_{ij}}<0$. Since increases in $\tau{ij}$ reduce the right-hand side of condition \ref{eqn:no_treaty_defection} (making the condition more likely to hold), they increase $i$'s incentive to avert Kessler Syndrome. \\

Next, we establish condition \ref{eqn:tax_makes_defection_worse}, which contains both the sides of condition \ref{eqn:defector_gets_punished}:
\begin{align}
    \pdv{}{\tau_{ij}} \left(- \frac{\bar{Q}}{N} - \frac{1}{c} \left( \beta_i - \sqrt{\beta_i^2 + 2cX} \right) \right) &= \frac{1}{N}\pdv{\bar{Q}}{\tau_{ij}} - \pdv{\beta_i}{\tau_{ij}} \left( \frac{1}{c} - \frac{\beta_i}{\sqrt{\beta_i^2 + 2 c X}} \right).
\end{align}
While the sign of  $\frac{1}{c} - \frac{\beta_i}{\sqrt{\beta_i^2 + 2 c X}}$ is ambiguous, some manipulation reveals that $\sqrt{\beta_i^2 + 2cX} - c \beta_i > 0$ is a sufficient condition to ensure $\frac{1}{c} - \frac{\beta_i}{\sqrt{\beta_i^2 + 2 c X}} > 0$. Thus, again applying Proposition \ref{prop:taxes_rents} and Lemma \ref{lemm:beta_i_statics}, we obtain the desired result:
\begin{align}
    \pdv{}{\tau_{ij}} \left(- \frac{\bar{Q}}{N} - \frac{1}{c} \left( \beta_i - \sqrt{\beta_i^2 + 2cX} \right) \right) &= \frac{1}{N}\pdv{\bar{Q}}{\tau_{ij}} - \pdv{\beta_i}{\tau_{ij}} \left( \frac{1}{c} - \frac{\beta_i}{\sqrt{\beta_i^2 + 2 c X}} \right) > 0.
\end{align}
\end{proof}

\end{document}